\definecolor{dark-blue}{rgb}{0.05,0.25,0.85}
\theoremstyle{plain}
\newtheorem{theorem}{Theorem}
\newcommand{\newtheoremwithcrefformat}[2]{%
  \newtheorem{#1}[theorem]{#2}%
  \crefformat{#1}{##2\MakeUppercase#1~##1##3}%
  \Crefformat{#1}{##2\MakeUppercase#1~##1##3}%
}
\theoremstyle{nonumberplain}
\newtheorem{proof}{Proof.}
\newcommand{\sref}[2]{\hyperref[#2]{#1~\ref{#2}}}
\newcommand{\wcol}{\mathrm{wcol}}
\newcommand{\col}{\mathrm{col}}
\newcommand{\td}{\mathrm{td}}
\newcommand{\WReach}{\mathrm{WReach}}
\newcommand{\str}{\mathbb}
\newcommand{\Oof}{\mathcal{O}}
\newcommand{\Cc}{\mathscr{C}}
\newcommand{\XXX}{\mathcal{X}}
\newcommand{\Xx}{\mathcal{X}}
\newcommand{\minor}{\preccurlyeq}
\newcommand{\N}{\mathbb{N}}
\newcommand{\R}{\mathbb{R}}
\newcommand{\dist}{\mathrm{dist}}
\newcommand{\rad}{\mathrm{rad}}
\tikzstyle{vertex}=[circle,inner sep=1.5,minimum size =1.5mm,semithick,fill=black, draw=black]
\title{Nowhere dense graph classes and \\algorithmic applications\\ {\large A tutorial at Highlights of Logic, Games and Automata 2019}}
\author{
Sebastian Siebertz
\\University of Bremen\\
\texttt{siebertz@uni-bremen.de}}
\date{}
\begin{document}
\maketitle

\vspace{-1.1cm}
\begin{abstract}
\noindent \textbf{Abstract.} The notion of nowhere dense graph classes 
was introduced by Ne\v{s}et\v{r}il and
Ossona de Mendez and 
provides a robust concept of uniform sparseness of graph classes. 
Nowhere dense classes generalize many familiar classes of sparse graphs such as classes that exclude a fixed graph as a minor or topological minor. 
They admit several seemingly unrelated 
natural characterizations that lead to strong algorithmic applications.
In particular, the model-checking
problem for first-order logic is fixed-parameter tractable over 
these classes. These notes, prepared for a tutorial at Highlights of
Logic, Games and Automata 2019, are a brief introduction to the theory 
of nowhere denseness, driven by algorithmic applications.
\end{abstract}

\section{Introduction}

The notion of excluded minors is celebrated as one of the most
successful notions in contemporary graph theory and has an immense influence on algorithmic graph theory. At its heart 
lies the structure theorem that states that every graph $G$ that
excludes a fixed graph $H$ as a minor 
can be decomposed in a treelike way
into parts that can be almost topologically embedded on a surface that~$H$ does not embed on~\cite{robertson1999graph}. Surprisingly, the theory of
\emph{bounded expansion} and 
\emph{nowhere dense graph classes}, i.e.\ the theory of 
\emph{bounded depth minors}, 
which is much simpler and yet deals with much more general graph classes,  
is much less known. The notions of bounded 
expansion and nowhere denseness 
were introduced by Ne\v{s}et\v{r}il and Ossona de Mendez~\cite{nevsetvril2008grad, nevsetvril2011nowhere} and 
provide a robust concept of uniform sparseness of graph classes. 
Classes with bounded expansion and nowhere dense classes generalize many familiar classes 
of sparse graphs, such as classes that exclude a fixed 
graph as a minor or topological minor. 
They admit several 
natural characterizations that lead to strong algorithmic applications.
In particular, the model-checking
problem for first-order logic is fixed-parameter tractable over 
these classes~\cite{dvokratho10, grohe2017deciding}. 

In this short exposition I would like to give a very
accessible introduction to the theory of bounded
expansion and nowhere denseness. The presentation is 
driven by the application of solving the first-order 
model-checking problem based on Gaifman's
locality theorem. Therefore, I focus on the aspect 
of appropriately localizing well known 
width measures from graph theory. The original 
definitions of bounded expansion and nowhere dense
classes are given by imposing restrictions on the 
\emph{bounded depth minors} that can be found 
in graphs from the class. An algorithmically very useful
equivalent definition of nowhere dense classes is given 
in terms of \emph{uniform quasi-wideness}, which is 
often considered as one of the more cumbersome
parts of the theory. I will present this concept as a local
version of treewidth and as a local version of treedepth, 
and hope to convince the reader of the beauty of the 
concept. Finally, a third characterization is provided in 
terms of weak reachability numbers, which can again 
be seen as a local version of treedepth.

\section{First-order model-checking}

First-order logic can express many interesting algorithmic 
properties of graphs such as the existence of an independent
set of size at least $k$, the existence of a (connected) 
dominating set of size at most $k$, and many more. 
For example, the formula \[\varphi_{\mathrm{dom}}^k\coloneqq 
\exists x_1\ldots\exists x_k\forall y\big(\bigvee_{1\leq i\leq k}
(y=x_i \vee E(y,x_i)\big)\] is true in a graph $G$ if and only
if $G$ has a dominating set of size at most $k$.  
The model-checking problem for 
first-order logic is the problem to test for an input structure
$\str A$ and input formula~$\varphi$ whether~$\varphi$ holds
in $\str A$, in symbols $\str A\models \varphi$. 

The model-checking problem for first-order logic on an 
input structure $\str A$ of order $n$ is decidable
in time $n^{\Oof(q)}$, where $q$ is the quantifier 
rank of the formula $\varphi$. Phrased in terms of parameterized
complexity, the problem belongs to the complexity 
class \textsf{XP} of \emph{slicewise polynomial} problems. 
It is expected that in general this running time cannot be 
avoided, e.g.\ testing whether a graph contains a clique
with $k$ vertices, which requires exactly $k$ quantifiers, 
cannot be done in time~$n^{o(k)}$ unless the exponential
time hypothesis fails~\cite{chen2006strong}. 

This has led to the investigation of structural properties 
of the input structures, especially of graphs, that allow
for more efficient model-checking. In particular, we search
for (the most general) graph classes on which the problem
can be solved in time $f(|\varphi|)\cdot n^c$ for some 
computable function~$f$ and constant $c$, i.e.\ for
classes where the problem is \emph{fixed-parameter tractable}
(parameterized by the formula length). As tractability 
for the model-checking problem 
of a logic implies tractability not
only for individual problems but for whole classes 
of problems, a tractability result for a model-checking
problem is often referred to as an \emph{algorithmic 
meta theorem}. 
There is a long line of meta theorems for first-order logic on
sparse structures~\cite{dawgrokre07,dvokratho10,flugro01,frigro01,
  kre11,see96}, culminating in the result that the model-checking
  problem for first-order logic is fixed-parameter
  tractable on every nowhere dense class of graphs~\cite{grohe2017deciding}. It was shown earlier that on 
every subgraph-closed graph class that is not nowhere dense
the problem is as hard as on all graphs~\cite{dvokratho10, kre11}, hence, the classification
of tractability for the first-order model-checking problem on subgraph-closed classes is 
essentially complete. 

\smallskip
The key property 
of first-order logic that is exploited for efficient model-checking
is \emph{locality}. Gaifman's Locality Theorem states that every
first-order formula~$\varphi(\bar x)$ is equivalent to a Boolean combination of 
\begin{enumerate}
\item \emph{local formulas} $\psi^{(r)}(\bar x)$ and 
\item \emph{basic local 
formulas} $\exists x_1\ldots \exists x_k\big(\bigwedge_{i\neq j}
\dist(x_i,x_j)>2r \wedge \chi^{(r)}(x_i)\big)$. 
\end{enumerate} 

Here, the notation $\psi^{(r)}(\bar x)$ means that for every
graph $G$ and every tuple $\bar v\in V(G)^{|\bar x|}$ we have 
$G\models \psi^{(r)}(\bar v)$  if and only if $G[N_r(\bar v)]\models \psi^{(r)}(\bar v)$, 
where $G[N_r(\bar v)]$ denotes the subgraph of~$G$ induced
by the $r$-neighborhood $N_r(\bar v)$ of $\bar v$. This 
property is syntactically ensured in $\psi^{(r)}$ 
by relativizing all quantifiers to distance at most $r$ from one 
of the the free variables. The numbers~$r$ and $k$ in the 
formulas above depend only on the formula~$\varphi$, 
and furthermore, the Gaifman normal form of any formula $\varphi$
is computable from $\varphi$. 

This translates the model-checking problem to the following 
algorithmic problem. To decide for a graph $G$ and tuple 
$\bar v\in V(G)^{|\bar x|}$ whether $G\models \varphi(\bar v)$, 
\begin{enumerate}
\item decide whether $\bar v$ has the local properties described
by $\psi^{(r)}(\bar x)$;
\item decide for each $v\in V(G)$ whether $G\models \chi^{(r)}(v)$;
\item solve each \emph{generalized
independent set} problem described by the basic local formulas 
$\exists x_1\ldots \exists x_k\big(\bigwedge_{i\neq j}
\dist(x_i,x_j)>2r \wedge \chi^{(r)}(x_i)\big)$, and finally
\item evaluate the Boolean combination of these 
statements that is equivalent to $\varphi$. 
\end{enumerate}  

\section{Bounded depth minors, bounded 
expansion and nowhere denseness}

By Gaifman's theorem, we expect to have efficient 
model-checking algorithms on graph classes that  
locally have nice properties. With this motivation in mind
we can try to find appropriate local versions of width 
measures that we know how to handle well. This approach
was followed e.g.\ in \cite{dawgrokre07,frigro01}
where it was simply required that the $r$-neighborhoods in graphs
from the class have good properties, e.g.\ they have bounded
treewidth, or exclude a minor. For 
example, we say that a class $\Cc$ has \emph{locally bounded 
treewidth} if for every $r\in \N$ there exists a number 
$t=t(r)$ such that for every $G\in \Cc$ and 
every $v\in V(G)$
the treewidth of $G[N_r(v)]$ is bounded
by $t$. 
Similarly, we say that a class
$\Cc$ \emph{locally excludes a minor} if for every $r\in \N$ 
there exists a number $m=m(r)$ such that for every $G \in \Cc$ 
and every $v\in V(G)$  the graph $G[N_r(v)]$
excludes the complete graph~$K_m$ on $m$ vertices as a minor (the concepts of 
treewidth and minors are defined formally below). Note however, that 
this approach of defining locally well behaved classes
is not very robust. For example, if we add to every 
graph $G\in \Cc$ an 
apex vertex, i.e.\ a vertex that is connected with every
other vertex of~$G$, then the resulting class has locally
bounded treewidth if and only if the class $\Cc$ has 
bounded treewidth. On the other, it is very easy to 
algorithmically handle the apex vertices and we are looking
for more robust locality notions. 

The following notion of \emph{bounded depth 
minors} is the fundamental notion in the theory 
of bounded expansion and nowhere denseness~ \cite{nevsetvril2008grad,nevsetvril2011nowhere}. 

\begin{definition}
A graph $H$ is a \emph{minor} of $G$, written $H\minor G$, if there is 
a map $\phi$ that assigns to every vertex $v\in V(H)$ a
connected subgraph $\phi(v) \subseteq G$ of $G$ and to every edge
$e\in E(H)$ an edge  $\phi(e)\in E(G)$ such that
\begin{enumerate}
  \item if  $u,v\in V(H)$ with $u\not= v$, then $V(\phi(v)) \cap V(\phi(u)) =
    \emptyset$ and
  \item if $e = uv \in E(H)$, then $\phi(e) = u'v'\in E(G)$ for some
   $u'\in V(\phi(u))$ and $v'\in V(\phi(v))$.
\end{enumerate}
The set $\phi(v)$ for a vertex $v\in V(H)$ is called the \emph{branch
set} or \emph{model of $v$ in $G$}. 
The map~$\phi$ is called the \emph{minor model of $H$ in $G$}. 
The \emph{depth} of a minor model is the maximal radius of its branch 
sets. For $r\in \N$, the graph $H$ is a {\em{depth-$r$ minor}} of $G$, written $H\minor_rG$, if there is a minor model
$\phi$ of~$H$ in $G$ of depth at most $r$. 
%
\end{definition}

Now, bounded expansion and nowhere dense classes are defined 
by imposing restrictions on the structure of bounded depth minors.


 
\begin{definition}
 A class $\Cc$ of graphs has
\emph{bounded expansion} if for every $r\in \N$ there exists 
number $d=d(r)$ such that  the edge density $|E(H)|/|V(H)|$ of
every $H\minor_r G$ for $G\in \Cc$ 
is \mbox{bounded by $d$}. 
\end{definition}

\begin{definition}
A class $\Cc$ of graphs is \emph{nowhere dense} if for every $r\in \N$ there exists a number $m=m(r)$ such that  we have 
$K_m\not\minor_r G$ for all $G\in\Cc$. 
\end{definition}

\begin{example*}
\mbox{}
\begin{enumerate}
\item Every class $\Cc$ that excludes a fixed graph $H$ as
a minor has bounded expansion. For such classes there exists an absolute constant
$c$ such that for all $r\in \N$ 
the edge density of depth-$r$ minors of graphs in $\Cc$
is bounded by $c$. 
Special cases are classes of bounded treewidth, the class of
planar graphs, and 
every class of graphs that can be drawn 
with a bounded number of crossings, see~\cite{nevsetvril2012characterisations}, and every class of graphs
that embeds into a fixed surface. 
\item Every class $\Cc$ that excludes a fixed graph $H$ as
a topological minor has bounded expansion. 
Every class that 
excludes $H$ as a minor also excludes $H$ as a topological 
minor. Further special cases are classes of bounded degree
and classes of graphs that can be drawn 
with a linear number of crossings, see~\cite{nevsetvril2012characterisations}. 
\item Every class of graphs that can be drawn with a bounded 
number of crossings per edge has bounded expansion~\cite{nevsetvril2012characterisations}. 
\item Every class of graphs with bounded queue-number, 
bounded stack-number or bounded non-repetitive chromatic number has bounded expansion~\cite{nevsetvril2012characterisations}. 
\item The class of Erd\"os-R\'enyi random graphs with 
constant average degree $d/n$, $G(n,d/n)$, has 
asymptotically almost surely bounded expansion~\cite{nevsetvril2012characterisations}. 
\item Every bounded expansion class is nowhere dense.
\item The class of graphs with girth greater than maximum degree
is nowhere dense (and has locally bounded treewidth) and does not have bounded expansion~\cite{nevsetril2012sparsity}. 
\end{enumerate}
\end{example*}

Nowhere dense classes can also be defined in terms of 
\emph{subdivisions} or \emph{topological minors}. This fact
is very useful for proving algorithmic
lower bounds for classes that are not nowhere dense. 
For $r\in \N$, a graph $H$ is an \emph{$r$-subdivision} of a 
graph $G$ if $H$ is obtained from~$G$ by replacing
every edge by a path of length $r+1$ (containing
$r$ inner vertices). 

\begin{lemma}\label{cor:sd}
Let $\Cc$ be a class that is not nowhere dense
and that is closed under taking subgraphs. Then 
there exists $r\in \N$ such that $\Cc$ contains an 
$r$-subdivision of every graph $H$. 
\end{lemma}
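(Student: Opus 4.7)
The strategy is to unpack the negation of nowhere denseness, convert bounded-depth minor models into subdivisions of bounded length, then use pigeonhole to fix a single length that works for all graphs $H$. Unfolding \Cref{definition} of nowhere denseness, the hypothesis gives an integer $r\in\N$ such that for every $m\in\N$ some graph $G_m\in\Cc$ satisfies $K_m\minor_r G_m$. Fix this $r$. Since $\Cc$ is subgraph-closed we may restrict $G_m$ to the vertices used by the minor model, so all constructions below land in $\Cc$ automatically.

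The first and technically hardest step is to pass from a depth-$r$ minor of $K_m$ to a topological minor of bounded length in a large sub-clique. Given a depth-$r$ minor model $\phi$ of $K_m$ in $G_m$, associate to each branch set $\phi(u)$ a BFS tree $T_u$ of radius at most $r$ rooted at some $x_u\in\phi(u)$. For each pair $u\neq v$ fix a portal edge $u'v'\in E(G_m)$ with $u'\in T_u,\ v'\in T_v$; concatenating the tree path $x_u\to u'$, the edge $u'v'$, and the tree path $v'\to x_v$ produces a walk $W_{uv}$ of length at most $2r+1$. These walks need not be internally disjoint, because paths within a common $T_u$ can share edges. The remedy is to pigeonhole on the branching structure inside each $T_u$: for a vertex $y$ of $T_u$ the portals routed through $y$ split according to which child of $y$ their path descends into, so by iterating this partition one obtains, for $m$ large, a sub-clique of size $f(m)\to\infty$ where the portals from every branch set lie in distinct subtrees and the induced walks $W_{uv}$ are internally vertex-disjoint paths. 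This yields a topological minor of $K_{f(m)}$ inside $G_m$ with every subdivision path of length in $\{1,2,\ldots,2r+1\}$.

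Second, uniformize the path lengths. Color each edge of this $K_{f(m)}$ topological minor by the length of its corresponding path, using at most $2r+1$ colors, and apply Ramsey's theorem to obtain a monochromatic subclique of size $g(m)$, where $g(m)\to\infty$ as $m\to\infty$. This produces, as a subgraph of $G_m\in\Cc$, an honest $\ell_m$-subdivision of $K_{g(m)}$ for some $\ell_m\in\{0,1,\ldots,2r\}$.

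Finally, pigeonhole over the finitely many possible values of $\ell_m$: there exists $\ell^*\in\{0,\ldots,2r\}$ for which $\ell_m=\ell^*$ holds for infinitely many $m$, and along that subsequence $g(m)\to\infty$. Hence $\Cc$ contains an $\ell^*$-subdivision of $K_n$ for arbitrarily large $n$. For an arbitrary graph $H$, pick any $n\geq|V(H)|$ such that $\Cc$ contains an $\ell^*$-subdivision of $K_n$; since $H\subset K_n$, the $\ell^*$-subdivision of $H$ is a subgraph of the $\ell^*$-subdivision of $K_n$ and therefore lies in $\Cc$ by subgraph-closure. Setting $r:=\ell^*$ proves the lemma. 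The main obstacle is the internally-disjoint-paths extraction in the first step; the remainder is Ramsey-style bookkeeping.
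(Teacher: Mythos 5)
The paper states this lemma without proof, so I can only judge your argument on its own terms. Your overall architecture is the standard one (it is essentially the Ne\v{s}et\v{r}il--Ossona de Mendez / Dvo\v{r}\'ak route): negate nowhere denseness to get $K_m\minor_r G_m$ for all $m$ at a fixed depth $r$, upgrade the shallow clique minor to a shallow \emph{topological} clique with paths of length at most $2r+1$, use Ramsey to make all path lengths equal, pigeonhole over the finitely many possible lengths to fix a single $\ell^*$, and finish with subgraph-closure since an $\ell^*$-subdivision of $H$ sits inside an $\ell^*$-subdivision of $K_{|V(H)|}$. Everything from the Ramsey step onward is correct and routine.

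The genuine gap is exactly where you locate it, and your proposed remedy does not work as stated. If the portals of $T_u$ destined for the surviving subclique lie in distinct child subtrees of some vertex $y$ of $T_u$, the walks $W_{uv}$ are \emph{still not} internally disjoint: each of them starts with the same tree path from the root $x_u$ down to $y$. The correct fix is to abandon $x_u$ and re-root, i.e., to take $y$ itself as the branch vertex of $u$ in the topological model; the paths from $y$ into distinct child subtrees (each of length at most $r$, then a portal edge, then at most $r$ more in $T_v$) are then genuinely internally disjoint. Two further points must then be supplied. First, the existence of such a hub: a rooted tree of depth at most $r$ in which every vertex has fewer than $d$ children whose subtrees contain portals carries at most roughly $(d+1)^r$ portals, so some $y_u$ serves about $(N/r)^{1/r}$ of the $N$ candidate partners. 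Second, the selections for different branch sets interact, so one must refine iteratively, processing one branch set at a time and shrinking the candidate set from $N$ to about $(N/r)^{1/r}$ at each step; this is consistent because shrinking the candidate set only deletes portals and preserves the distinct-subtree property already arranged for earlier branch sets, and the surviving clique size is still of order $\log\log m$, which tends to infinity with $m$ --- all that your argument needs. Without the re-rooting and this quantitative bookkeeping, the crucial first step is asserted rather than proved.
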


Using the lemma it is for example not difficult to show
that the first-order model-checking problem on every
class that is not nowhere dense and closed under taking
subgraphs is as hard as on the class of all graphs. 

Finally, we note that 
nowhere dense classes are sparse. 

\begin{theorem}[\cite{dvovrak2007asymptotical,nevsetvril2011nowhere}]
A class 
$\Cc$ of graphs is nowhere dense if and only 
if for all real
$\epsilon>0$ and all $r\in \N$ there exists 
an integer~$n_0$ such that all
$n$-vertex graphs $H\minor_r G$ for $G\in \Cc$ 
with $n\geqslant n_0$ have edge density at most 
$n^\epsilon$.  
\end{theorem}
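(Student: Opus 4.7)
The plan is to prove the two directions separately. The backward direction is immediate, while the forward direction relies on a ``density amplification'' step that converts dense shallow minors into large clique minors of bounded depth.

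\emph{Backward direction.} Assume the density bound holds and suppose for contradiction that $\Cc$ is not nowhere dense. Then there is some $r \in \N$ such that $K_m \minor_r G_m$ for arbitrarily large $m$ and some $G_m \in \Cc$. Applying the hypothesis with $\epsilon = 1/2$ to the depth-$r$ minor $H = K_m$ (which has $n = m$ vertices and edge density $(m-1)/2$) gives $(m-1)/2 \leq m^{1/2}$ for all $m \geq n_0$, which fails for $m$ large.

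\emph{Forward direction.} Assume $\Cc$ is nowhere dense and suppose for contradiction that there are $\epsilon_0 > 0$, $r_0 \in \N$, and a sequence of depth-$r_0$ minors $H_n \minor_{r_0} G_n$ with $G_n \in \Cc$, $n := |V(H_n)| \to \infty$, and $|E(H_n)| > n^{1+\epsilon_0}$. The goal is to produce, for each $n$, a complete graph $K_{t_n}$ as a depth-$r^*$ minor of $G_n$, with $t_n \to \infty$ and $r^*$ a constant depending only on $\epsilon_0$ and $r_0$; this contradicts nowhere denseness at parameter $r^*$. The argument has three steps. First, I extract a subgraph $H_n' \subseteq H_n$ of minimum degree at least $n^{\epsilon_0}$ by iteratively discarding vertices of degree below this threshold, a standard pigeonhole count ensuring that the process leaves a nonempty graph. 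Second, I apply a shallow-minor variant of the Kostochka--Thomason theorem to $H_n'$: a graph of minimum degree $d$ contains $K_{t(d)}$ as a depth-$s$ minor, where $s$ can be taken as a constant (depending on the desired growth rate of $t(d)$) and $t(d) \to \infty$ as $d \to \infty$; this yields $K_{t_n} \minor_s H_n$ with $t_n \to \infty$ and $s = s(\epsilon_0)$ constant. Third, I compose minor models: a depth-$a$ minor of a depth-$b$ minor is a depth-$(a(2b+1)+b)$ minor (lift each branch set in $H_n$ to the union of the corresponding depth-$b$ balls in $G_n$ together with the edges realizing the edges of $H_n$ used to connect them), so $K_{t_n} \minor_{r^*} G_n$ for $r^* := s(2r_0+1) + r_0$.

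The main obstacle is the second step. The classical Kostochka--Thomason bound produces a large clique minor of unrestricted depth from high average degree, whereas here branch sets of \emph{bounded} radius are required. This is the content of Dvo\v{r}\'ak's shallow-minor density theorem, whose proof builds the branch sets by a breadth-first expansion argument trading the size of the resulting clique against the radius allowed. Granting this lemma, the remaining steps amount to pigeonhole extraction of a dense subgraph and careful bookkeeping of radii under composition.
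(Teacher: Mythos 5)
The paper itself states this theorem with citations only and gives no proof, so there is no in-text argument to compare against. Your overall architecture does match the standard proof from the cited sources: the backward direction via the density of cliques is correct and complete, the extraction of a subgraph of minimum degree exceeding the average degree is standard, and your composition bound for shallow minor models ($a(2b+1)+b=2ab+a+b$) is the right one.

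The genuine problem is the key lemma you invoke in the second step: it is \emph{false} as you state it. It is not true that every graph of minimum degree $d$ contains $K_{t(d)}$ as a depth-$s$ minor for a constant $s$ and some $t(d)\to\infty$. Indeed, a $d$-regular graph of girth greater than $6s+3$ contains no $K_3$ as a depth-$s$ minor at all: three pairwise adjacent connected branch sets of radius at most $s$ yield a closed walk of length at most $6s+3$ that traverses one of the three connecting edges exactly once, hence a cycle of that length. Since $d$-regular graphs of arbitrarily large girth exist for every $d$ (Erd\H{o}s--Sachs), minimum degree alone, however large, forces nothing at bounded depth. The correct form of the density-amplification lemma of Dvo\v{r}\'ak and Ne\v{s}et\v{r}il--Ossona de Mendez requires the degree to be polynomial in the \emph{order} of the graph: for every $\epsilon>0$ there is $s=s(\epsilon)$ and a function $t(m)\to\infty$ such that every $m$-vertex graph of minimum degree at least $m^{\epsilon}$ contains $K_{t(m)}$ as a depth-$s$ minor. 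The polynomial degree bound is exactly what excludes the high-girth obstruction, since by the Moore bound an $m$-vertex graph of minimum degree $m^{\epsilon}$ has girth $O(1/\epsilon)$, and this is the hard point of the whole theorem rather than a routine variant of Kostochka--Thomason. Your construction does in fact supply the stronger hypothesis --- the subgraph $H_n'$ has minimum degree at least $n^{\epsilon_0}\geq |V(H_n')|^{\epsilon_0}$ --- so the argument is repairable by invoking the lemma in its order-dependent form; but as written, the black box you rely on is not a true statement.
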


At this point the notions of bounded expansion and
nowhere denseness are established as abstract
concepts. Observe that we have achieved the desired
robustness of the concepts under small changes, such
as adding apex vertices to the graphs of a class $\Cc$. 
On the other hand observe that we cannot expect
to find a structure theorem as for classes that 
exclude a fixed minor $H$. For example the class
of graphs that contains the $n$-subdivision of every
$n$-vertex graph $G$ has bounded expansion and 
we cannot find a global decomposition for the graphs
from this class. This example also shows the 
limitations for algorithmic applications. We will e.g.\ not
be able to solve \emph{global} connectivity problems more efficiently
than on general graph classes. 
We will now move to the tools that can be used
to handle bounded expansion and nowhere dense
graph classes. 

\section{Uniform quasi-wideness and separating
neighborhoods}

The \emph{separator width} of a graph $G$ is defined 
as the minimum number $k$ such that for every 
$A\subseteq V(G)$ there exists a set $S$ of order 
at most $k$ such that for every component $C$ of 
$G-S$ we have $|V(C)\cap A|\leq |A|/2$. A class of graphs
has bounded treewidth if and only if it has bounded 
separator width. In fact, the main algorithmic applications 
of graphs with bounded treewidth follow from the property 
that these graphs admit small balanced
separators. Following our goal of 
finding an appropriate localization of this property
we give the following definition. 

\begin{definition}\label{def:neighborhood-sep}
A class $\Cc$ of graphs admits \emph{balanced neighborhood
separators} if for every $r\in \N$  and every real $\epsilon>0$ 
there exists a number $s=s(r,\epsilon)$ such that the following holds. For every graph $G\in\Cc$ and every subset $A\subseteq V(G)$ 
there exists a set $S\subseteq V(G)$ of 
order at most $s$ such that the $|N_r(v)^{G-S}\cap A|\leq \epsilon|A|$ 
for all $v\in V(G)\setminus S$. 
\end{definition}

\begin{theorem}[\cite{nevsetvril2016structural}]\label{thm:neighborhood-sep}
A class $\Cc$ of graphs is nowhere dense if 
and only if $\Cc$ admits balanced neighborhood 
separators. 
\end{theorem}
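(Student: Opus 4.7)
The plan is to prove the two directions separately. The $(\Leftarrow)$ direction is a direct lower-bound construction using subdivisions of cliques; the $(\Rightarrow)$ direction leverages the toolkit developed for nowhere dense classes, either via uniform quasi-wideness or via weak coloring numbers. A quick preliminary observation is that the balanced neighborhood separator property passes to subgraph closures (given $A\subseteq V(H)$ with $H\subseteq G\in\Cc$, the set $S\cap V(H)$ still separates, since deleting more vertices can only shrink $r$-neighborhoods), so in the $(\Leftarrow)$ direction one may freely close $\Cc$ under subgraphs.

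For $(\Leftarrow)$ I would contrapose. Assuming $\Cc$ is not nowhere dense, pass to its subgraph closure and apply \Cref{cor:sd} to obtain $r$ such that the closure contains an $r$-subdivision $G_n$ of $K_n$ for every $n$. Let $A$ be the set of $n$ principal vertices; any two of them lie at distance exactly $r+1$ via a unique path with $r$ inner vertices, and distinct such paths share no inner vertex. Set $r':=r+1$ and $\epsilon:=1/4$, and suppose that a separator $S$ of bounded size $s:=s(r',\epsilon)$ existed. Pick any $v\in A\setminus S$ (possible for $n>s$). A principal $u\in A\setminus S$ can leave $N_{r'}(v)^{G_n-S}$ only if $S$ hits some inner vertex of the $v$--$u$ path; since each inner vertex lies on a unique such path, at most $s$ of them are blocked. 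Hence $|N_{r'}(v)^{G_n-S}\cap A|\geq n-2s-1$, which exceeds $\epsilon|A|=n/4$ for $n$ large, contradicting the assumption.

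For $(\Rightarrow)$ I would use uniform quasi-wideness, one of the equivalent characterizations of nowhere denseness advertised in the tutorial: for every $r$ and $m$ there exist $N, t$ such that every $G\in\Cc$ and every $B\subseteq V(G)$ with $|B|\geq N$ admit a set $T$ of size $\leq t$ and an $r$-independent $I\subseteq B\setminus T$ of size $\geq m$ in $G-T$. Given $A$, $r$, $\epsilon$, apply uniform quasi-wideness with radius $2r$ iteratively: in each round, obtain a small set $T_i$ that isolates a large cluster of the current $A$-vertices, add $T_i$ to the growing separator $S$, and restrict attention to the sub-portion of $A$ whose $r$-neighborhoods in $G-S$ still contain too many $A$-vertices. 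After $\Oof(\log(1/\epsilon))$ rounds one obtains a cumulative $S$ of size depending only on $r,\epsilon$ for which $|N_r(v)^{G-S}\cap A|\leq\epsilon|A|$ holds for all $v\notin S$. Alternatively, using the weak coloring characterization, one can exploit that for $u,v$ connected in $G-S$ by a path of length $\leq r$ the $\preceq$-minimum $w$ on such a path lies in $\WReach_{2r}[u]\cap\WReach_{2r}[v]\setminus S$, and then take $S$ to be the vertices that weakly $2r$-reach too many $A$-vertices.

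The main obstacle is the forward direction. Uniform quasi-wideness only ensures that a \emph{large} portion of $A$ becomes $r$-spread after removing a small separator, whereas \Cref{def:neighborhood-sep} demands a simultaneous bound on $|N_r(v)^{G-S}\cap A|$ for \emph{every} $v\notin S$. Bridging this gap forces the iterative peeling argument, and the delicate point is keeping the accumulated separator size a function of $r$ and $\epsilon$ alone as the problematic sub-portion of $A$ shrinks through the rounds. In the weak-coloring variant the same difficulty surfaces as the need to calibrate the heavy-hitter threshold against the $n^{o(1)}$ growth of $\wcol_{2r}$, and it is exactly this calibration---rather than any single ingredient---that constitutes the technical heart of the equivalence.
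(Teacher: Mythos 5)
Your backward direction is sound and essentially the paper's argument: contrapose, invoke \cref{cor:sd} to get $r$-subdivisions of all cliques, and observe that no bounded-size deletion set can shrink the neighborhoods of the principal vertices. (The paper takes $A$ to be all vertices of the subdivision and works with radius $2r$ and $\epsilon=1/2$; your variant with only principal vertices, radius $r+1$ and $\epsilon=1/4$ works just as well, and your preliminary remark that the separator property passes to subgraphs --- needed to legitimately apply \cref{cor:sd} --- is a point the paper glosses over.)

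The forward direction, however, has a genuine gap, and you half-admit it yourself. You propose to apply uniform quasi-wideness repeatedly to the ``current $A$-vertices'' and accumulate the returned deletion sets over $\Oof(\log(1/\epsilon))$ rounds, but you never specify a progress measure: UQW applied to a subset of $A$ returns a small $S$ and a large scattered subset of $A$, which says nothing about the vertices $v\notin S$ whose $r$-balls still capture more than $\epsilon|A|$ points of $A$, and there is no a priori bound on how many rounds (hence how large a cumulative $S$) this requires. The paper's trick is to apply UQW to a \emph{candidate separator} $X$ rather than to $A$, and to run a size-decreasing exchange argument. Start from the trivial valid separator $X=V(G)$. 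With radius $4r$, $s=s(4r)$ and $m=\lfloor 1/\epsilon\rfloor+s+1$: if $|X|>N(4r,m)$, UQW gives $Y$ with $|Y|\leq s$ and a distance-$4r$ independent $X'\subseteq X\setminus Y$ of size $m$ in $G-Y$. The $2r$-balls of the members of $X'$ in $G-Y$ are pairwise disjoint, so at most $\lfloor 1/\epsilon\rfloor$ of them contain more than $\epsilon|A|$ points of $A$; hence some $X''\subseteq X'$ with $|X''|>s\geq|Y|$ consists of ``light'' vertices, and $Z=(X\setminus X'')\cup Y$ is a strictly smaller valid separator --- any $v$ whose $r$-ball grew when passing from $G-X$ to $G-Z$ must lie within distance less than $r$ of some released $x\in X''$, so $N_r^{G-Z}(v)\subseteq N_{2r}^{G-Z}(x)$, which is light. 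Iterating until $|X|\leq N$ yields the required $S$ of size depending only on $r$ and $\epsilon$. This monotone-decrease exchange on the separator is the missing idea; without it (or the alternative via \cref{thm:nd-wcol}, which you also only gesture at), the forward direction is not established.
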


The proof uses the above mentioned characterization 
of nowhere dense classes in terms of uniform
quasi-wideness that we define next. 
If $G$ is a graph and $A\subseteq
V(G)$, then $A$ is \emph{distance-$r$ independent} 
if the vertices of $A$ have pairwise distance greater
than $r$ in $G$. 

\begin{definition}\label{def:uqw}
  A
  graph class~$\Cc$ is {\em{uniformly quasi-wide}} 
  if for all $r,m\in \N$ there exist numbers $s=s(r)$ and 
  $N=N(r,m)$ such that the following holds. For every graph
  $G\in \Cc$ and every set $A\subseteq V(G)$:
  if $|A|\geq N$, then there exists
  $S\subseteq V(G)$ with $|S|\leq s$ and $B\subseteq A\setminus S$ with $|B|\geq m$ such that 
  $B$ is distance-$r$ independent in $G-S$. 
\end{definition}

\begin{theorem}[\cite{nevsetvril2011nowhere}]\label{thm:nd-uqw}
A class $\Cc$ of graphs is nowhere dense if and only
if $\Cc$ is uniformly quasi-wide. 
\end{theorem}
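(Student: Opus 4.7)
The plan is to prove both implications of the equivalence.

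For the easy direction, I would argue by contrapositive: if $\Cc$ is not nowhere dense, then uniform quasi-wideness must fail. Fix $r$ such that $K_m\minor_r G$ for some $G\in\Cc$ and arbitrarily large $m$, with minor model $\phi$, branch sets $\phi(1),\dots,\phi(m)$ of radius at most $r$, and centers $v_1,\dots,v_m$. Since any two branch sets are joined by a model edge, $\dist_G(v_i,v_j)\leq 2r+1$ for all $i\neq j$. Apply uniform quasi-wideness to $A=\{v_1,\dots,v_m\}$ with radius parameter $r'=2r+2$ and target size $m'=s(r')+2$; provided $m\geq N(r',m')$, this yields $S\subset V(G)$ with $|S|\leq s(r')$ and $B\subset A\setminus S$ with $|B|\geq s(r')+2$ that is distance-$(2r+2)$ independent in $G-S$. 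Since the branch sets are pairwise vertex-disjoint, $S$ meets at most $s(r')$ of them; but whenever $v_i,v_j\in B$ both have branch sets disjoint from $S$, the length-$(\leq 2r+1)$ path from $v_i$ through $\phi(i)$, across the model edge joining $\phi(i)$ and $\phi(j)$, and back through $\phi(j)$ to $v_j$ survives in $G-S$, violating distance-$(2r+2)$ independence. Hence at most one $v_i\in B$ can have a branch set disjoint from $S$, forcing $|B|\leq s(r')+1$, a contradiction.

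The harder direction I would prove by induction on $r$. The base case $r=1$ reduces to finding a large independent set in $G[A]$: nowhere denseness at depth $0$ bounds the clique number of every $G\in\Cc$ by a uniform $t$, so Ramsey's theorem extracts an independent set of size $m$ from $G[A]$ whenever $|A|\geq R(t,m)$, giving $s(1)=0$. For the inductive step, assume the statement at depth $r$ and let $A\subset V(G)$ be large. First apply the induction hypothesis to obtain a small $S_1$ and a large $A_1\subset A\setminus S_1$ that is distance-$r$ independent in $G-S_1$. Build the auxiliary graph $H$ on $A_1$ in which $v\sim w$ iff $\dist_{G-S_1}(v,w)=r+1$; since $A_1$ is distance-$r$ independent, non-adjacency in $H$ means distance strictly greater than $r+1$. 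Ramsey's theorem yields either a large independent set $B$ in $H$---which is then distance-$(r+1)$ independent in $G-S_1$ and completes the step with the same $S_1$---or a large clique $\{v_1,\dots,v_t\}$ in $H$, consisting of $t$ vertices of $A_1$ pairwise at distance exactly $r+1$ in $G-S_1$. In the clique case, fix a shortest $(v_i,v_j)$-path for each pair and select its middle vertex (or one of its middle vertices if $r+1$ is odd); a further Ramsey argument on these midpoints produces either a common midpoint $u$ shared by many pairs---giving a star-shaped $K_\ell$ model of depth $\lceil(r+1)/2\rceil$ rooted at $u$---or a family of pairwise disjoint midpoints yielding a similarly shallow $K_\ell$ minor. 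For $\ell$ above the nowhere denseness threshold at the relevant depth, this contradicts the hypothesis, so the clique case cannot arise.

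The main obstacle is the midpoint extraction in the inductive step: the depth of the extracted clique minor must be matched carefully against the nowhere denseness parameter, and one must handle both parities of $r+1$ uniformly (middle vertex versus middle edge). Equally delicate is preserving that $s=s(r)$ depends only on $r$, which forces the clique case of $H$ to be refuted purely by a density argument rather than by enlarging $S_1$. Propagating the Ramsey bounds through the induction to obtain explicit values of $N(r,m)$ is then routine bookkeeping.
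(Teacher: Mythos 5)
The paper itself does not prove \cref{thm:nd-uqw} (it explicitly defers to~\cite{nevsetvril2011nowhere} because ``the proof is quite technical''), so your proposal can only be judged on its own merits. Your easy direction (uniformly quasi-wide $\Rightarrow$ nowhere dense, by contrapositive) is correct: the centers of the branch sets of a deep $K_m$ model are pairwise at distance at most $2r+1$, the deleted set $S$ can kill at most $|S|$ of the disjoint branch sets, and any two surviving branch sets still provide a short path, so $|B|\leq s+1$. The base case of the hard direction (Ramsey against the bounded clique number) is also fine.

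The inductive step of the hard direction has a genuine gap, concentrated exactly where you locate the ``main obstacle.'' Your stated goal of refuting the clique case of $H$ ``purely by a density argument rather than by enlarging $S_1$'' is impossible. Concretely, take $G=K_{1,n}$ and $A$ its set of leaves with $r+1=2$: the leaves are distance-$1$ independent and pairwise at distance exactly $2$, so $H$ is a complete graph on $A$, yet $G$ is a tree and contains no shallow clique minor whatsoever. The only way forward is to delete the common midpoint. This shows your ``common midpoint'' subcase is broken as stated: a single hub $u$ within distance $\lceil (r+1)/2\rceil$ of many $v_i$ is a star-shaped configuration, which is a $K_{1,\ell}$ subdivision, not a $K_\ell$ minor, and contradicts nothing. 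Such hubs must be added to $S$, and one then needs a separate argument bounding the number of hubs that can be deleted over all rounds so that $s(r)$ stays independent of $m$ and $|A|$ --- this is precisely the technical heart of the known proofs. The ``pairwise disjoint midpoints'' subcase is also insufficient: distinct midpoints do not make the $\binom{t}{2}$ connecting paths internally disjoint, and assembling a depth-$\Oof(r)$ clique minor requires an actual disjoint-path (sunflower/greedy) extraction on the whole paths, whose degenerate outcome is again a small set of high-multiplicity vertices that must go into $S$. In short, the skeleton (induction on $r$, auxiliary distance graph, Ramsey) matches the standard approach, but the case analysis must be reorganized around ``delete few high-degree connectors, or else find many disjoint connectors and hence a forbidden shallow (topological) minor''; as written, both subcases of your clique analysis fail.
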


We are not going to prove \cref{thm:nd-uqw} as the proof
is quite technical. However, to get familiar with the concept
of uniform quasi-wideness it is instructive to 
prove \cref{thm:neighborhood-sep}. 

\begin{proof}[of \cref{thm:neighborhood-sep}]
Let $\Cc$ be nowhere dense and let $r\in \N$ and 
$\epsilon>0$. According to \cref{thm:nd-uqw},
$\Cc$ is uniformly quasi-wide. Hence, for $r'=4r$ there 
exists $s=s(r')$ and for $m=\left\lfloor 1/\epsilon\right\rfloor+s+1$ there exists $N=N(r',m)$ such that 
for every graph $G\in \Cc$ and every $X\subseteq V(G)$:
  if $|X|\geq N$, then there exists
  $Y\subseteq V(G)$ with $|Y|\leq s$ and $X'\subseteq X\setminus Y$ with $|X'|\geq m$ such that 
  $X'$ is distance-$4r$ independent in $G-Y$. 

Let $G\in \Cc$ and $A\subseteq V(G)$. 
We aim to prove that there exists $S\subseteq V(G)$ with 
$|S|\leq N$ such that $|N_r^{G-S}(v)\cap A|\leq \epsilon|A|$ 
for all $v\in V(G)\setminus S$. 

Let $X\subseteq V(G)$ be any set such that for all $v\in V(G)\setminus X$ we have 
$|N_r(v)^{G-X}\cap A|\leq \epsilon |A|$. We show that 
if $|X|>N$, then there exists $Z\subseteq V(G)$ with 
$|Z|<|X|$ such that for all $v\in V(G)\setminus Z$ 
we have $|N_r^{G-Z}(v)\cap A|\leq \epsilon |A|$. The claim
follows by repeating the argument until $|Z|\leq N$ and 
then setting $S=Z$. 

If $|X|>N$, then there 
exist sets $Y\subseteq V(G)$ with 
$|Y|\leq s$ and $X'\subseteq X\setminus Y$ with $|X'|\geq 
\left\lfloor 1/\epsilon\right\rfloor+s+1$ that is 
distance-$4r$ independent in $G-Y$. In $X'$ there are at most 
$\left\lfloor 1/\epsilon\right\rfloor$ vertices $v$ with 
$|N_{2r}^{G-Y}(v)\cap A|\geq \epsilon|A|$, as these neighborhoods are 
disjoint. Hence, since $|X'|\geq 
\left\lfloor 1/\epsilon\right\rfloor+s+1$, 
there is a subset $X''\subseteq X'$ with 
$|X''|>s$ and such that $|N_{2r}^{G-Y}(v)\cap A|\leq \epsilon |A|$ for all $v\in X''$. Let $Z=(X\setminus X'')\cup Y$. 

We claim that every $v\in V(G)\setminus Z$ satisfies 
$|N_r(v)^{G-Z}\cap A|\leq\epsilon |A|$. 
To see this, let $v\in V(G)\setminus Z$ and assume 
$N_r^{G-Z}(v)\cap A \neq N_r^{G-X}(v)\cap A$ (we have
to consider only such elements, as by assumption $|N_r^{G-X}(v)\cap A|\leq \epsilon|A|$
for all $v\in V(G)\setminus X$). 
This implies that there is $x\in X''$ such that
$\mathrm{dist}_{G-Z}(v,x)< r$. 
This implies $N_r^{G-Z}(v)\subseteq N_{2r}^{G-Z}(x)$. 
By construction we have $|N_{2r}^{G-Z}(x)\cap A|\leq \epsilon |A|$, 
as claimed. 

\medskip
Vice versa, assume $\Cc$ is not nowhere dense. 
We show that $\Cc$ does not admit balanced
neighborhood covers. Let $s\colon \N\times\R\rightarrow\N$ be an arbitrary function. 
According to \cref{cor:sd}, there is $r\in \N$ such that
$\Cc$ contains an $r$-subdivision of every graph 
$H$. Let $n\coloneqq 2s(2r,1/2)$. Let $G\in \Cc$ be 
such that an $r$-subdivision of $K_n$ is a subgraph
of $G$. Let $A$ be a set of vertices of $G$ that contains
the vertices of this subdivision. Let $S$ be any set
of size at most $s(2r,1/2)$. Then the graph $G[A\setminus S]$
contains a vertex whose $2r$-neighborhood has order 
at least $|A|-n-\binom{n/2}{2}r>|A|/2$. Hence, $s$ is not a function for choosing $s=s(r,\epsilon)$ for balanced neighborhood
separators. As $s$ was chosen arbitrary, this proves
the claim. 
\end{proof}

The proof of \cref{thm:neighborhood-sep} can be 
made algorithmic: we algorithmically iterate
the exchange argument of the proof until we arrive at a 
set of order at most $N$. In each 
step we need to compute a set $Y$ (the set $S$ in the
definition of uniform quasi-wideness). This can be done
in polynomial time, see~\cite{kreutzer2018polynomial, pilipczuk2018number}. The work~\cite{pilipczuk2018number} gives also the best known bounds for the function~$N$
in the definition of uniform quasi-wideness. 

The algorithmic applications lie at hand. We can 
recursively decompose local neighborhoods into 
smaller and smaller pieces such that the recursion
stops after $\log n$ steps. In the next section we
will see that we can do even better and get a recursion
tree of  depth depending only on $r$.  

\section{Uniform quasi-wideness and splitting
neighborhoods}

Graph classes whose members admit tree 
decompositions of bounded width and bounded 
depth are called classes with \emph{bounded treedepth}. 
The notion of treedepth was introduced by Ne\v{s}et\v{r}il and
Ossona de Mendez 
in~\cite{nevsetvril2006tree}, equivalent notions were 
studied before under different names. We refer to~\cite{nevsetril2012sparsity} for a discussion 
on the various equivalent parameters. 

A \emph{rooted tree $T$} is an acyclic connected graph with 
one designated root vertex.
This imposes the standard \emph{ancestor/descendant
relation} in $T$: 
a node $v$ is a descendant of all the nodes that appear on the unique 
path leading from $v$ to the root. 
A \emph{rooted forest} $F$ is a disjoint union of rooted trees. 
We write $u\leq_F v$ if~$u$ is an ancestor of $v$ in $F$.
The relation~$\leq_F$ is a partial order on the nodes of $F$ with the 
roots being the $\leq_F$-minimal elements. 
The \emph{depth} of a vertex $v$ in a rooted forest $F$ 
is the number of vertices on the path from $v$ to the root 
(of the tree to which $v$ belongs).  The \emph{depth} 
of $F$ is the maximum depth of the vertices of $F$.

\begin{definition}
Let $G$ be a graph. The \emph{treedepth} $\td(G)$ of $G$
is the minimum depth of a rooted forest $F$ on the 
same vertex set as $G$ such that 
whenever $uv\in E(G)$, then $u\leq_F v$ or $v\leq_F u$. 
\end{definition}

We can equivalently define treedepth by the following elimination 
game. Let $\ell\in \N$. The \emph{$\ell$-round treedepth
game} on a graph $G$ is played by two players, 
\emph{connector} and \emph{splitter}, as follows. 
We let~$G_0:=G$. In round~$i+1$ of the game, connector chooses 
a component $C_{i+1}$ of $G_i$. Then splitter picks a vertex
$w_{i+1}\in V(C_{i+1})$. We let~$G_{i+1}:=C_{i+1}-\{w_{i+1}\}$. 
Splitter wins if~$G_{i+1}=\emptyset$. Otherwise the game 
continues at~$G_{i+1}$. If splitter has not won after~${\ell}$ 
rounds, then connector~wins.

A \emph{strategy} for splitter is a function~$\sigma$ that maps every
partial play $(C_1, w_1, \dots,C_s, w_s)$, with associated
sequence~$G_0, \dots, G_s$ of graphs, and the next
move~$C_{s+1}$ of connector, to a vertex \mbox{$w_{s+1}\in V(C_{s+1})$} that is the next move of
splitter. A strategy~$\sigma$ is a \emph{winning strategy} for
splitter if splitter wins every play in which she follows the
strategy~$f$. We say that splitter \emph{wins} the simple $\ell$-round
radius-$r$ splitter game on~$G$ if she has a winning strategy.

\begin{lemma}[Folklore]
A graph $G$ has treedepth $\ell$ if and only if splitter wins the
$\ell$-round treedepth game on $G$. 
\end{lemma}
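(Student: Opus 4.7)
My plan is to prove the equivalent statement $\td(G) \leq \ell$ iff splitter wins the $\ell$-round treedepth game on $G$, by induction on $\ell$; the lemma as stated then drops out by applying the equivalence at $\ell$ and at $\ell - 1$.

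For the forward direction, given an elimination forest $F$ of $G$ of depth at most $\ell$, the strategy I would give splitter is: whenever connector selects a component $C$ of the current graph, splitter plays the unique $\leq_F$-minimum vertex of $V(C)$. The engine of this strategy is the structural sublemma that for every connected subgraph $H$ of $G$, the set $V(H)$ has a unique $\leq_F$-minimum $r_H$ and every vertex of $H$ is a descendant of $r_H$ in $F$. I would establish this by induction on $|V(H)|$: peel a leaf $u$ of a spanning tree of $H$ and apply the inductive hypothesis to $H - u$, obtaining a minimum $r'$ with all remaining vertices beneath it; the edge joining $u$ to its neighbour in $H - u$ forces $u$ to be $\leq_F$-comparable to that neighbour (hence, via a short case distinction, comparable to $r'$), so $u$ either slots in below $r'$ or becomes a new strict $\leq_F$-ancestor of $r'$. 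Granted the sublemma, after splitter plays $r_C$ each component of $C - \{r_C\}$ lies inside a subtree of $F$ rooted at a proper child of $r_C$, whose depth drops by at least one, so the outer induction on $\ell$ closes the argument.

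For the backward direction, I would assume splitter has a winning strategy $\sigma$ on $G$ within $\ell$ rounds and build an elimination forest of depth $\leq \ell$ component by component. For each connected component $C$ of $G$, let $w_C \coloneqq \sigma(C)$ be splitter's response when connector first picks $C$. The restriction of $\sigma$ to the continuation starting at $G_1 = C - w_C$ is a winning strategy for splitter on $C - w_C$ within $\ell - 1$ rounds, so by the inductive hypothesis $C - w_C$ admits an elimination forest $F_C$ of depth at most $\ell - 1$. I would form a rooted tree $T_C$ by hanging $F_C$ below a new root $w_C$, and take the disjoint union of the $T_C$ across components of $G$. Validity is immediate: any edge $uv$ of $G$ lies inside some $C$; either $w_C \in \{u,v\}$ and comparability with the root holds, or $uv$ lies in $C - w_C$ and is handled by $F_C$.

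The main obstacle is the auxiliary structural sublemma used in the forward direction—that the vertices of any connected subgraph of $G$ sit inside a single subtree of $F$ rooted at their unique $\leq_F$-minimum—since everything else is routine induction and bookkeeping. I would therefore state and prove this sublemma as a preliminary step, before beginning either direction.
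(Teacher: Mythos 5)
Your argument is correct, and it is the standard one; note that the paper labels this lemma as folklore and supplies no proof at all, so there is nothing to compare against on that side. Your structural sublemma (every connected subgraph of $G$ has a unique $\leq_F$-minimum vertex that is an ancestor of all its other vertices) is exactly the right engine for the forward direction, and the leaf-peeling induction proving it works, provided you make explicit that in the case $u <_F w$ the comparability of $u$ with $r'$ comes from the fact that the ancestors of $w$ form a chain in $F$. One small imprecision: after splitter deletes $r_C$, a component $D$ of $C-\{r_C\}$ lies in the subtree rooted at its own minimum $r_D$, which is a strict descendant of $r_C$ but not necessarily a child of it; the depth bound $\td(D)\leq \ell-1$ still follows since $r_D$ sits at depth at least one more than $r_C$. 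The backward direction and the reduction of the exact equality to the two inequalities at $\ell$ and $\ell-1$ are fine.
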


We now consider the following change of the rules of the game that is motivated by our goal to find an appropriate localization of treedepth. 
The game gets an additional parameter $r$ for the 
radius. 
Instead of picking in round $i+1$ of the game a component $C_{i+1}$ 
of the currently considered graph $G_i$, connector picks a 
subgraph $C_{i+1}$ of radius at most $r$ in $G_i$. Formally, we consider the following game.

 Let~${\ell},r\in \N$. The \emph{simple $\ell$-round
  radius-$r$ splitter game} on a graph~$G$ is played by two players,
\emph{connector} and \emph{splitter}, as follows. We let~$G_0:=G$. In
round~$i+1$ of the game, connector chooses a
subgraph $C_{i+1}$ of $G_i$ of radius at most $r$. 
Then splitter picks a vertex
$w_{i+1}\in V(C_{i+1})$. We
let~$G_{i+1}:=C_{i+1}-\{w_{i+1}\}$. Splitter
wins if~$G_{i+1}=\emptyset$. Otherwise the game continues
at~$G_{i+1}$. If splitter has not won after~${\ell}$ rounds, then
connector wins. Strategies are defined as above. 


\begin{theorem}[\cite{grohe2017deciding}] A class $\Cc$
  of graphs is nowhere dense if and only if for every $r\in \N$ 
  there exists a number $\ell=\ell(r)$ such that 
  splitter wins the simple
  $\ell$-round radius-$r$ splitter game on every graph $G\in\Cc$.
\end{theorem}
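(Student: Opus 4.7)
The plan is to prove both implications separately, using uniform quasi-wideness (\cref{thm:nd-uqw}) for the forward direction and the subdivision characterization (\cref{cor:sd}) for the backward direction. I treat the easier backward direction first.

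For the backward direction, I argue contrapositively. Assume $\Cc$ is not nowhere dense; passing to the subgraph closure if necessary, \cref{cor:sd} yields an $r_0$ such that $\Cc$ contains an $r_0$-subdivision of every finite graph. Given an arbitrary $\ell$, pick a game radius $r$ large enough (e.g.\ $r := 2r_0 + 1$) that any $r_0$-subdivision of a complete graph has radius at most $r$ from every branch vertex; I then exhibit $G \in \Cc$ on which connector wins the $\ell$-round radius-$r$ splitter game. Take $G$ to contain an $r_0$-subdivision of $K_t$ with $t$ linear in $\ell$. Connector's strategy maintains the invariant that the current subgraph contains an $r_0$-subdivision of a sufficiently large complete graph. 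The first move is the entire subdivision, which is a legal play by the choice of $r$. Each subsequent round, splitter's single removal destroys at most one branch vertex together with its incident paths, or a single internal vertex of one subdivided path; in either case an $r_0$-subdivision of a complete graph on $t - \Oof(1)$ branch vertices survives, which connector plays next. Choosing $t$ linear in $\ell$ ensures connector survives $\ell$ rounds.

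For the forward direction, assume $\Cc$ is nowhere dense and fix $r$. I aim to design a winning strategy for splitter that terminates in $\ell(r)$ rounds, with $\ell(r)$ derived from the uniform quasi-wideness parameters at an inflated radius $r' = \Oof(r)$. The natural first attempt is: whenever connector plays $C_{i+1}$ with some center $v_{i+1}$, splitter removes $v_{i+1}$. Since $C_{i+1} \subseteq G_i \subseteq C_i$ and $v_i$ is the center of $C_i$, the sequence $v_1, v_2, \ldots$ of removed centers satisfies $\dist_G(v_i, v_{i+1}) \leq r$. To bound $\ell(r)$, I argue by contradiction: if the game lasts at least $N(r', m)$ rounds, where $N$ comes from \cref{def:uqw} and $m$ is chosen strictly larger than the nowhere-dense threshold for depth-$\Oof(r)$ minors, apply uniform quasi-wideness to the accumulated set of centers to obtain a separator $Y$ with $|Y| \leq s(r')$ and a distance-$r'$-independent subset of $m$ centers in $G - Y$. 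Combining this subset with the short paths (of length $\leq r$) that linked consecutive centers in $G$ yields, after careful bookkeeping on which internal path vertices survive, a depth-$\Oof(r)$ minor model of $K_m$ in $G$, contradicting nowhere denseness and bounding $\ell(r)$ by $N(r', m)$.

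The principal technical obstacle is the forward direction: a plain chain of centers is a path, not a clique, so converting the quasi-wideness extraction into a genuine $K_m$ minor requires either a more refined splitter strategy (for instance, one that plays recursively with smaller radii within each round) or a careful re-routing argument that assembles the chain segments between far-apart centers into distinct branch sets of the minor. The inflated radius $r'$ and the target $m$ must be calibrated together so that, after removing $Y$, enough connectivity between the selected centers survives to witness a depth-$\Oof(r)$ clique minor forbidden by nowhere denseness.
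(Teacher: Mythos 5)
Your backward direction is sound and is essentially the standard argument (the paper's displayed proof does not even include this direction): one only needs to add the observation that if connector survives $\ell$ rounds on a subgraph $H\subseteq G$ then connector survives $\ell$ rounds on $G$ itself, so passing to the subgraph closure before invoking \cref{cor:sd} is harmless; the bookkeeping that an $r_0$-subdivision of $K_{t}$ loses only one branch vertex per round is correct.

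The forward direction, however, has a genuine gap, and it is exactly the obstacle you flag at the end without resolving. If splitter deletes only the centers $v_1,v_2,\dots$, then applying uniform quasi-wideness to the accumulated centers yields a small set $Y$ and a large distance-$r'$ independent subset $I$ in $G-Y$ --- but this is \emph{not} a contradiction: uniform quasi-wideness \emph{guarantees} such a configuration exists in any large set, so nothing is violated, and there is no route from a chain of pairwise-close centers to a $K_m$ depth-$\Oof(r)$ minor (the connecting paths may overlap arbitrarily, and calibrating $r'$ and $m$ cannot fix this). The contradiction must instead exploit what splitter has \emph{deleted}. The paper's key idea, which is the piece missing from your proposal, is to strengthen splitter's moves: splitter is allowed to delete a set of $\Oof(\ell\cdot r)$ vertices per round (simulated by extra single-vertex rounds, which is why this costs nothing), and in round $i+1$ she deletes the union, restricted to $C_{i+1}$, of short paths $P_{j,i+1}$ from every earlier center $v_j$ to the new center $v_{i+1}$. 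This forces the paths $P_{2j-1,2j}$ between consecutive pairs of centers to be pairwise vertex-disjoint (each such path is wiped from the game graph before the next one is created). Now each of the $s+1$ disjoint paths between independent pairs must contain a vertex of the quasi-wideness separator $S$ with $|S|\le s$, and the pigeonhole principle gives the contradiction directly --- no minor model is ever constructed. Without this disjointness mechanism your strategy does not terminate the argument, so as written the forward direction does not go through.
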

\begin{proof}
For convenience we allow splitter in every round $i$
to delete not a single vertex $w_i$ 
but a set $W_i$ of $m(r)$ vertices
for any fixed function $m$. Obviously this does not
give him additional power, as he can simulate the 
deletion of $m$ vertices in $m$ rounds of the game. 

  Let $r\in \N$. As~$\Cc$ is nowhere dense, it is also uniformly
  quasi-wide. Let~$s=s(r)$ and \mbox{$N=N(r,2s+2)$} be the 
  numbers satisfying the properties of \cref{def:neighborhood-sep}. 
  Let~${\ell}:=N$ and \mbox{$m(r)\coloneqq
  {\ell}\cdot (r+1)$}. Note that both~${\ell}$ and~$m$ only depend
  on~$\Cc$ and~$r$. We claim that for any~$G\in\Cc$, 
  splitter wins the $\ell$-round radius-$r$ splitter game
  in which splitter is allowed to delete~$m(r)$ vertices in each round. 

  Let~$G\in\Cc$ be a graph. In the game
  on~$G$, splitter uses the following strategy. In the first round, if
  connector chooses a subgraph $C_1$ of $G_0=G$ of radius
  at most $r$, say rooted at a vertex $v_1\in V(C_1)$, i.e.\ 
  $V(C_1)\subseteq N_r(v_1)$, 
  then splitter
  chooses the set~$W_1:=\{v_1\}$. Now let~$i>1$ and suppose that~$v_1,\ldots,
  v_i, G_1,\ldots, G_i, W_1,\ldots, W_i$ have already been
  defined. Suppose connector chooses a subgraph~$C_{i+1}$ 
  of $G_i$, say rooted at $v_{i+1}\in V(G_i)$. We
  define~$W_{i+1}$ as follows. For each~$1\leq j\leq i$, choose a
  path~$P_{j,i+1}$ in~$C_j$ of length at
  most~$r$ connecting~$v_j$ and~$v_{i+1}$. Such a path 
  must exist
  as~$v_{i+1}\in V(C_{i})\subseteq V(C_j)\subseteq
  N_r^{G_{j-1}}(v_j)$. We let~$W_{i+1}:=\bigcup_{1\leq j\leq
    i}V(P_{j,i+1})\cap V(C_{i+1})$. Note that~$|W_{i+1}|\leq
  i\cdot (r+1)$ (the paths have length at most~$r$ and hence consist
  of~$r+1$ vertices). It remains to be shown that the length of any
  such play is bounded by ~${\ell}$.
  
  Assume towards a contradiction that connector can survive on~$G$
  for~${\ell}'={\ell}+1$ rounds. \mbox{Let~$(v_1,\ldots,
  v_{{\ell}'},$ $G_1,\ldots, G_{{\ell}'}, W_1,\ldots, W_{{\ell}'})$} be
  the play. As~${\ell}'>N(r,2s+2)$, for~$W:=\{v_1,\ldots,
  v_{{\ell}'}\}$ there is a set~$S\subseteq V(G)$
  with~$|S|\leq s$, such that~$W$ contains an~$r$-independent
  set~$I$ of size~$t:=2s+2$ in~$G-S$.   
  Without loss of generality assume that $I=\{v_1,\ldots, v_{\ell'}\}$.

  We now consider the pairs~$(v_{2j-1}, v_{2j})$ for~$1\leq j\leq
  s+1$. By construction,~$P_j:=P_{2j-1, 2j}$ is a path of
  length at most~$r$ from~$v_{2j-1}$ to~$v_{2j}$
  in~$G_{2j-2}$. Any path~$P_j$ must necessarily contain a
  vertex~$s_j\in S$, as otherwise the path would exist in~$G-S$,
  contradicting the fact that~$I$ is \mbox{$r$-independent} in~$G - S$. 
  We claim
  that for~$i\neq j$,~$s_i\neq s_j$, but this is not possible, as
  there are at most~$s$ vertices in~$S$. To prove the claim, 
  assume~$i>j$. Then~$V(P_j)\cap V(G_{2j-1})\subseteq W_{2j}$, 
  thus~$V(P_j)\cap V(G_{2j})=\emptyset$,
  and~$V(P_i)\subseteq V(G_{2i-2})\subseteq V(G_{2j})$. 
  Thus~$V(P_i)\cap V(P_j)=\emptyset$ for~$i\neq j$.
  
  \mbox{}
\end{proof}

\vspace{-3mm}
It is easy to see that the strategy of splitter is efficiently
computable, as it amounts to computing breadth-first
searches in the subgraphs arising in the game. The splitter
game allows to recursively decompose local neighborhoods
such that the recursion tree has bounded depth. This can 
be used for example to solve the generalized distance-$r$
independent set problem that arises as a problem in the
model-checking algorithm. 

\smallskip
For the general model-checking
problem we still have to deal with two combinatorial 
problems. The first problem is the following. In a naive
approach we would translate an input formula~$\varphi$
into Gaifman normal form and for each of the local 
formulas $\chi^{(r)}(x)$ and for each vertex $v\in V(G)$
try to evaluate whether~$G\models \chi^{(r)}(v)$. 
This is equivalent to evaluating whether~$G[N_r(v)]\models
\chi^{(r)}(v)$. We would treat the $r$-neighborhood of 
each vertex $v$ as the first
move of connector in the splitter game and delete splitter's
answer from $G[N_r(v)]$. By marking the neighbors of 
all deleted vertices we can translate the formula $\chi$
to an equivalent formula $\chi'$ over an extended vocabulary. 
We then translate~$\chi'$ again into Gaifman normal form 
and recurse. The first problem of this approach 
is that when translating $\varphi$ into
Gaifman normal form, we introduce new quantifiers to 
syntactically localize the formula~$\chi^{(r)}$. This leads to a higher
locality radius $r'$ when translating the
formula~$\chi'$ again into Gaifman normal form, and so on. 
Hence, we cannot play the splitter game with the constant
radius $r$ in this naive approach. 
The second problem is that even if we fixed the first
problem the resulting algorithm would
have a worst-case running time of
$n^{\Oof(\ell(r))}$, as we create a recursion tree with 
worst-case branching degree~$n$ and depth $\ell(r)$. This
is no improvement over the simple algorithm running in time
$n^{\Oof(|\varphi|)}$. 

\smallskip
The first problem is handled as follows. We know that
the new quantifiers that are used in $\chi$ are only used
to localize the formula, that is, to express distance constraints. 
We can therefore enrich first-order logic by atoms to express
distances, so that we do not waste quantifiers for localization. 
We have to be careful though, as these new quantifiers bring
additional power to our formulas. The clue is to define a new
rank function (instead of quantifier rank) that limits the use
of distance atoms in the scope of quantifiers. Intuitively, 
the more quantifiers are available in a subformula (of original
first-order logic), the larger 
distances the formula can express. 
By carefully choosing the rank function 
we get a modified version of Gaifman's locality theorem 
such that the rank remains stable under localization. 

The second problem is handled as follows. We cannot afford
a branching degree $n$ in the recursion, but instead we 
must group closeby vertices that share many vertices in their
\mbox{$r$-neighborhoods} in clusters. This concept is captured 
by the notion of neighborhood covers that is explained next. 

\section{Neighborhood covers and weak coloring numbers}

The existence of \emph{sparse neighborhood covers} for 
nowhere dense graph classes is derived 
from a second characterization of treedepth via \emph{elimination
orderings}. The appropriate local version of this measures
leads to the definition of \emph{weak coloring numbers}. 
Let me define sparse neighborhood covers first. 

\begin{definition}
For~$r\in\N$, an \emph{$r$-neighborhood cover}~$\XXX$ of a graph~$G$ is a set
of connected subgraphs of~$G$ called \emph{clusters}, such that for every
vertex~$v\in V(G)$ there is some~$X\in\XXX$ with~$N_r(v)\subseteq V(X)$.
The \emph{radius}~$\rad(\XXX)$ of a cover~$\XXX$ is the maximum radius of any of
its clusters. The \emph{degree}~$d^\XXX(v)$ of~$v$ in~$\XXX$ is the number of
clusters that contain~$v$. A class $\Cc$ \emph{admits 
sparse neighborhood
covers} if there exists $c\in \N$ and for all
$r\in\N$ and all real $\epsilon>0$ a number $d=d(r,\epsilon)$ 
such that every $n$-vertex graph $G\in \Cc$ 
admits an $r$-neighborhood cover of radius at most $c\cdot
r$ and degree at most $d\cdot n^\epsilon$. 
\end{definition} 


\begin{theorem}[\cite{grohe2017deciding,grohe2015colouring}]\label{thm:nd-nc}
A class $\Cc$ is nowhere dense if and only
if the class $\Cc_\subseteq=\{H\subseteq G : G\in \Cc\}$ 
admits sparse neighborhood
covers.
\end{theorem}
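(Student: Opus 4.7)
The plan is to treat the weak coloring number characterization of nowhere denseness, previewed at the end of the introductory section, as the workhorse of both implications: I take as a black box the (separately proved) fact that $\Cc_\subseteq$ is nowhere dense if and only if for every $r \in \N$ and every $\epsilon > 0$ there is a constant $c_0$ such that every $n$-vertex $G \in \Cc_\subseteq$ admits a linear order $\sqsubseteq$ on its vertices with $\wcol_r(G, \sqsubseteq) \le c_0 n^\epsilon$.

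For the \emph{forward} direction, fix $r$ and $\epsilon$, let $G \in \Cc_\subseteq$ with $n \coloneqq |V(G)|$, and choose $\sqsubseteq$ witnessing $\wcol_{2r}(G, \sqsubseteq) \le c_0 n^\epsilon$. For each $v \in V(G)$ I would define
\[
  X_v \;\coloneqq\; G\big[\{u \in V(G) : v \in \WReach_{2r}[G, \sqsubseteq, u]\}\big],
\]
and take $\XXX \coloneqq \{X_v : v \in V(G)\}$. Three claims then need verification. (i) Each $X_v$ is a connected subgraph of radius at most $2r$ rooted at $v$: for any $u \in X_v$, the weakly $2r$-reaching $u$-to-$v$ path lies inside $X_v$, because every intermediate vertex $z$ still satisfies $v \in \WReach_{2r}[z]$ by truncation. (ii) $d^\XXX(u) = |\WReach_{2r}[u]| \le \wcol_{2r}(G, \sqsubseteq) \le c_0 n^\epsilon$, giving the desired degree bound with $c = 2$ and $d = c_0$. (iii) For every $w \in V(G)$, setting $v^* \coloneqq \min_\sqsubseteq \WReach_r[w]$, we have $N_r(w) \subseteq X_{v^*}$. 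Step (iii) is the combinatorial heart of the argument. Given $u \in N_r(w)$, I would concatenate a shortest $u$-$w$ path with a weakly $r$-reaching $w$-to-$v^*$ path to obtain a walk $W$ of length at most $2r$ from $u$ to $v^*$, and let $z$ be the $\sqsubseteq$-minimum vertex on $W$. The sub-walk of $W$ from $w$ to $z$ has length at most $r$ and, after reducing to a simple path, retains $z$ as its $\sqsubseteq$-minimum, so $z \in \WReach_r[w]$; by minimality of $v^*$ this forces $z = v^*$. Returning to $W$, the full walk from $u$ to $v^*$ then has $v^*$ as its $\sqsubseteq$-minimum and reduces to a simple path of length $\le 2r$ witnessing $v^* \in \WReach_{2r}[u]$, i.e., $u \in X_{v^*}$.

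For the \emph{backward} direction I argue the contrapositive. If $\Cc$ is not nowhere dense, then \cref{cor:sd} supplies $r_0 \in \N$ such that $\Cc_\subseteq$ contains the $r_0$-subdivision of every graph; by iterating the subdivision operation we may further assume $\Cc_\subseteq$ contains $R$-subdivisions of $K_n$ for arbitrarily large $R$ and $n$. The aim is to rule out sparse neighborhood covers on this family. The cleanest route is to establish the implication ``sparse neighborhood covers $\Rightarrow$ polynomially bounded weak coloring numbers'' separately: given a sparse $r$-cover, one extracts a suitable ordering by repeatedly placing a well-chosen vertex of the cluster intersection structure last, charging weak reach sets against cover degrees, and then invokes the black-box characterization. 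A more hands-on route directly exploits that in an $R$-subdivision of $K_n$ with $R$ chosen suitably large compared to the cover's radius blow-up constant $c$, every cluster of bounded diameter can contain at most one principal vertex, so that counting (vertex, cluster) incidences against the cover's degree bound forces a polynomial lower bound violating the $n^\epsilon$ sparsity budget.

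The main obstacle is step (iii) in the forward direction---it is what explains the factor $2$ in the weak reach radius and it carries the entire construction; the other forward-direction items are routine unfoldings of the definitions. The backward direction, while conceptually clean via the subdivision lemma, hides its difficulty in the quantitative argument that sparse neighborhood covers preclude deeply subdivided cliques, which is most transparent via the weak coloring number detour.
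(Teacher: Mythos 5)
Your forward direction is, in substance, the paper's own argument: you invoke \cref{thm:nd-wcol} as a black box and then reconstruct \cref{thm:covers} --- the clusters $X_v=G[\{u : v\in\WReach_{2r}[G,\pi,u]\}]$, the degree computation $d^\XXX(u)=|\WReach_{2r}[G,\pi,u]|$, and the covering step via a $\pi$-minimal ``anchor'' for $N_r(w)$. Your $v^*=\min_\sqsubseteq\WReach_r[w]$ coincides with the paper's $m(w)=\min_\pi N_r(w)$, since the minimum of $N_r(w)$ is itself weakly $r$-reachable from $w$ along a shortest path; your step (iii) and your explicit check that each cluster is connected of radius $\leq 2r$ (by truncating reaching paths) are correct and slightly more careful than the paper's write-up.

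The gap is in the backward direction. Route (a) is only a gesture: the implication ``sparse neighborhood covers imply polynomially bounded weak coloring numbers'' is precisely the nontrivial content you would have to supply, and ``placing a well-chosen vertex of the cluster intersection structure last'' does not yet constitute an extraction of an order. Route (b) is actually false as stated: deep subdivisions of cliques \emph{do} admit sparse neighborhood covers. Note that \cref{cor:sd} only provides a \emph{fixed} $r_0$ such that $\Cc_\subseteq$ contains the $r_0$-subdivision of every $K_n$; that graph has diameter $O(r_0)$ independently of $n$, and since the constant $c$ in the definition of sparse covers may depend on the class, taking $c\geq 2(r_0+1)$ makes the single cluster $X=G$ a valid $r$-neighborhood cover of radius $\leq c\cdot r$ and degree $1$ for every $r\geq 1$. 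If instead you inflate the subdivision length $R$ until $2cr<R$, then clusters of radius $cr$ indeed contain at most one principal vertex --- but so do the sets $N_r(v)$ that must be covered, and an explicit cover by overlapping path segments plus one star per principal vertex has degree $O(1)$. So subdivided cliques never witness the failure of sparse covers, and no counting of (vertex, cluster) incidences on them can.

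The correct lower bound (from the cited references) replaces $K_n$ by $n$-vertex graphs $H$ of girth greater than about $4c+2$ with $n^{1+\delta}$ edges (which exist by Erd\H{o}s, with $\delta\approx 1/(4c+2)$), and takes $G$ to be their $r_0$-subdivision with $r=r_0$. The midpoint of each subdivided edge $uv$ forces some cluster to contain both $u$ and $v$; the girth condition forces the principal vertices inside any single cluster of radius $\leq cr_0$ to induce a \emph{forest} in $H$, so each cluster accounts for fewer $H$-edges than principal vertices it contains; double-counting edge--cluster incidences against $\sum_{v\in V(H)}d^\XXX(v)\leq n\cdot d\cdot|V(G)|^{\epsilon}$ then contradicts $|E(H)|=n^{1+\delta}$ once $\epsilon<\delta/2$. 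It is exactly the girth hypothesis that makes the per-cluster edge count collapse, and it is absent from your proposal. (In fairness, the paper itself only proves the forward direction via \cref{thm:covers} and defers the converse to \cite{grohe2017deciding,grohe2015colouring}.)
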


The proof of the theorem is based on a characterization of 
nowhere dense classes in terms of weak coloring numbers, 
which can be seen as another local version of treedepth. 
An order of the vertex set $V(G)=\{v_1,\ldots,v_n
\}$ of an $n$-vertex graph $G$ is a permutation 
$\pi=(v_1,\ldots, v_n)$. We say that $v_i$ is smaller than $v_j$
and write $v_i<_\pi v_j$ if $i<j$. We write $\Pi(G)$ for the 
set of all orders of $V(G)$. The \emph{coloring number} $\col(G)$ of a graph 
$G$ is the minimum integer $k$ such that there exists a linear 
order $\pi$ of the vertices of $G$, such that every vertex $v$ has 
back-degree at most~$k-1$, i.e., at most $k-1$ neighbors $u$ with 
$u<_\pi v$. The coloring number of $G$ minus one is equal to the 
\emph{degeneracy} of $G$, which is the minimum integer $\ell$ such that every subgraph $H\subseteq G$
has a vertex of degree at most $\ell$.

\begin{definition}
Let $G$ be a graph and let $\pi$ be 
an order of $V(G)$. We say that a 
vertex $u\in V(G)$ is \emph{weakly reachable} with 
respect to $\pi$ from a vertex $v\in V(G)$ if $u\leq_\pi v$ and 
there exists a path~$P$ between $u$ and $v$ with 
$w>_\pi u$ for all internal vertices $w\in V(P)$. We write $
\WReach[G,\pi,v]$ for the set of vertices that are
weakly reachable from $v$. The \emph{depth} 
of~$\pi$ on $G$ is the maximum over all vertices $v$ of
$G$ of $|\WReach[G,\pi,v]|$. 
\end{definition}

\begin{lemma}[see e.g.~\cite{nevsetril2012sparsity}, Lemma 6.5]
Let $G$ be a graph. The treedepth of $G$ is equal to 
the minimum depth over all orders $\pi$ of $V(G)$. 
\end{lemma}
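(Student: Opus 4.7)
The plan is to prove $\td(G) = \min_\pi \max_v |\WReach[G,\pi,v]|$ by mutually converting orderings into treedepth decompositions and vice versa.

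For the inequality $\td(G) \le \min_\pi \max_v |\WReach[G,\pi,v]|$, take an order $\pi$ realizing the right-hand side value $d$. I would define a rooted forest $F$ on $V(G)$ by letting the parent of a vertex $v$ be the $\pi$-maximum element of $\WReach[G,\pi,v]\setminus\{v\}$ whenever this set is nonempty, and otherwise making $v$ a root; since each such parent is $\pi$-smaller than its child, $F$ is acyclic. The central technical ingredient is a transitivity lemma: if $t\in\WReach[G,\pi,u]$ and $u\in\WReach[G,\pi,v]$, then $t\in\WReach[G,\pi,v]$. I would prove this by concatenating the two witnessing paths into a walk from $t$ to $v$; its internal vertices (including the joining copy of $u$) all lie strictly $\pi$-above $t$, using $u>_\pi t$ to boost the constraint $>_\pi u$ on the $u\to v$ piece, and a simple subpath extracted from this walk inherits the same property. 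Iterating the lemma along the $F$-ancestor chain of any $v$ shows that this chain is contained in $\WReach[G,\pi,v]$, so the $F$-depth of $v$ is at most $|\WReach[G,\pi,v]|\le d$. Conversely, an induction on $\pi$ from the bottom upward shows that every $u\in\WReach[G,\pi,v]\setminus\{v\}$ is an $F$-ancestor of $v$: if $u$ is not the designated parent $p$ of $v$, then $u<_\pi p$, and applying the same concatenation trick to the paths witnessing $u,p\in\WReach[G,\pi,v]$ gives $u\in\WReach[G,\pi,p]$, so $u$ is an ancestor of $p$ by induction and hence of $v$. In particular, every edge $uv$ with $u<_\pi v$ puts $u$ into $\WReach[G,\pi,v]$ via the one-edge path, so $u$ and $v$ are $F$-comparable, confirming that $F$ is a valid treedepth decomposition of depth at most $d$.

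For the reverse inequality, take a rooted forest $F$ of minimum depth $d=\td(G)$ and let $\pi$ be any linear extension of the ancestor order $\leq_F$. I claim that every element of $\WReach[G,\pi,v]$ is either $v$ itself or a proper $\leq_F$-ancestor of $v$, which immediately yields $|\WReach[G,\pi,v]|\le d$. Given $u\in\WReach[G,\pi,v]$ with $u\neq v$ and a witnessing path $u=w_0,w_1,\ldots,w_k=v$, every edge $w_iw_{i+1}$ has $\leq_F$-comparable endpoints because $F$ is a treedepth decomposition. A short induction on $k$ shows that one of the $w_i$, say $w_m$, is a $\leq_F$-ancestor of all the other $w_j$: the inductive step adds $w_k$, which is comparable to $w_{k-1}$ and therefore also comparable to the current candidate ancestor, so either $w_k$ is absorbed or it becomes the new candidate. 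Finally $m=0$: if $m\ge 1$, then $w_m$ would be a proper $\leq_F$-ancestor of $u$, hence $w_m<_\pi u$ because $\pi$ extends $\leq_F$, contradicting the internal-vertex condition $w_m>_\pi u$.

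The trickiest step, I expect, is the transitivity lemma in the first direction: one must concatenate the two witnessing paths and cleanly extract a simple path from $t$ to $v$ whose internal vertices still lie strictly $\pi$-above $t$. Once that is in place, everything else reduces to two straightforward inductions and the small case analysis underlying the common-ancestor argument in the forest.
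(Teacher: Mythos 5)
Your proof is correct. Note that the paper itself gives no proof of this lemma---it is stated with a pointer to Ne\v{s}et\v{r}il and Ossona de Mendez's book---so there is nothing in the text to compare against; your argument is essentially the standard one from that reference. Both directions check out: the transitivity of weak reachability (concatenate the two witnessing walks, note that the joining vertex $u$ and all internal vertices lie strictly $\pi$-above $t$, and extract a simple path) is exactly the right key lemma for building the elimination forest from $\pi$, and the converse direction correctly uses the fact that on any path whose edges join $\leq_F$-comparable vertices, the $\leq_F$-minimal vertex is an ancestor of all others. One cosmetic remark: in the final contradiction of the second direction, the case $m=k$ (i.e.\ $w_m=v$) is not covered by the ``internal-vertex condition'' you cite, but it is covered by the definition of weak reachability, which already gives $v>_\pi u$; so the inequality $w_m>_\pi u$ holds for every $m\geq 1$ and the argument goes through unchanged.
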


We can naturally define a local version of 
weak reachability. 

\begin{definition}
Let $G$ be a graph and $r\in \N$. Let $\pi$ be a 
linear order of~$V(G)$. We say that a
vertex $u\in V(G)$ is \emph{weakly $r$-reachable} with respect
to $\pi$ from a vertex
$v\in V(G)$ if $u\leq_\pi v$ and there exists a path~$P$
between $u$ and $v$ of length at most~$r$ with $w>_\pi u$ for all
internal vertices $w\in V(P)$. The set of vertices weakly $r$-reachable
by $v$ with respect to the order $\pi$ is denoted $\WReach_r[G,\pi,v]$.
We define 
\[\wcol_r(G,\pi)\coloneqq \max_{v\in V(G)}|\WReach_r[G,\pi,v]|,\]
and the \emph{weak $r$-coloring number} $\wcol_r(G)$ as
\[\wcol_r(G)\coloneqq \min_{\pi\in \Pi(G)}\max_{v\in V(G)}|\WReach_r[G,\pi,v]|.\]
\end{definition}

It is immediate from the definitions that 
\[\col(G)=\wcol_1(G)\leq \wcol_2(G)\leq \ldots \leq \wcol_n(G)=
\td(G).\] 
Hence, the weak $r$-coloring numbers can be seen as gradations 
between the coloring number~$\col(G)$ and the treedepth $\td(G)$ of
$G$. 
The weak $r$-coloring numbers 
capture local separation properties of $G$ as follows. 

\begin{lemma}\label{lem:wcol-sep}
Let $G$ be a graph, let $\pi$ be an order of $V(G)$ and let $r\in \N$. Let $u,v\in 
V(G)$, say $u<_\pi v$, and assume that $u\not\in \WReach_r[G,\pi,v]$. Then every path $P$ of length at most 
$r$ connecting $u$ and $v$ intersects $\WReach_r[G,\pi,v]\cap
\WReach_r[G,\pi,u]$. 
\end{lemma}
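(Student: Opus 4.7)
The plan is to exhibit a concrete witness on the path $P$ that lies in both weakly reachable sets, and the natural candidate is the $\pi$-minimum vertex of $P$. So I fix a path $P$ of length at most $r$ from $u$ to $v$ and let $w$ be the unique $\pi$-minimum vertex on $V(P)$ (uniqueness is free because $\pi$ is a linear order).

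First I would verify that $w$ is a proper internal vertex of $P$. Clearly $w \neq v$, since $u <_\pi v$ forces $w \leq_\pi u <_\pi v$. The interesting case is ruling out $w = u$: if $u$ were the $\pi$-minimum of $P$, then every internal vertex $w'$ of $P$ would satisfy $w' >_\pi u$, which together with the length bound $|P| \leq r$ would give $u \in \WReach_r[G,\pi,v]$, contradicting the hypothesis. Hence $w$ lies strictly inside $P$, and in particular $w <_\pi u <_\pi v$.

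Next I would use $w$ to witness both memberships. Split $P$ at $w$ into a $u$–$w$ subpath $P_u$ and a $v$–$w$ subpath $P_v$, both of length at most $r$. By the choice of $w$ as the $\pi$-minimum of $P$, every internal vertex of $P_v$ is $>_\pi w$, so $P_v$ witnesses $w \in \WReach_r[G,\pi,v]$; the same argument applied to $P_u$ gives $w \in \WReach_r[G,\pi,u]$. Therefore $w \in V(P) \cap \WReach_r[G,\pi,v] \cap \WReach_r[G,\pi,u]$, as required.

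There is no real obstacle here beyond making sure the hypothesis $u \notin \WReach_r[G,\pi,v]$ is actually used, which is exactly what rules out the degenerate case $w = u$; everything else is a direct consequence of choosing $w$ as the $\pi$-minimum of $P$ and splitting $P$ at $w$.
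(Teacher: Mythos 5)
Your proof is correct and follows exactly the paper's argument: the paper's (one-line) proof also takes the $\pi$-minimum vertex of $P$ and observes that it is weakly $r$-reachable from both $u$ and $v$. Your additional care in ruling out $w=u$ via the hypothesis and splitting $P$ into the two subpaths is just a more detailed write-up of the same idea.
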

\begin{proof}
Let $P$ be any path of length
at most $r$ connecting $u$ and $v$. Then the minimum 
vertex of $P$ lies both in $\WReach_r[G,\pi,v]$ and in 
$\WReach_r[G,\pi,u]$. 
\end{proof}

\begin{theorem}[\cite{zhu2009colouring}]\label{thm:be-wcol}
 A class $\Cc$ of graphs has bounded expansion if and only
 if for every $r\in \N$ there exists a number $w=w(r)$ 
 such that for every $G\in \Cc$ we have 
 $\wcol_r(G)\leq w$.  
\end{theorem}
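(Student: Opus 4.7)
The plan is to prove the two directions separately, since they have rather different flavors.

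For the easier direction (bounded weak coloring numbers $\Rightarrow$ bounded expansion), I would fix $r \in \N$ and show that $\col(H) \leq \wcol_{2r+1}(G)$ for every depth-$r$ minor $H \minor_r G$. Given a minor model $\phi$ and an order $\pi$ on $V(G)$ witnessing $\wcol_{2r+1}(G) \leq w$, I would pick for each $v \in V(H)$ the $\pi$-minimum vertex $m_v$ of the branch set $\phi(v)$ and order $V(H)$ by $m_v$. For each edge $uv \in E(H)$ with $m_u <_\pi m_v$, the union $\phi(u) \cup \phi(v)$ together with the edge $\phi(uv)$ contains a $c_u$-to-$c_v$ path, which combined with paths inside the branch sets yields a path from $m_u$ to $m_v$ of length at most $2r+1$. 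Since $m_u$ is minimum in $\phi(u)$, the path can be arranged so that its $\pi$-minimum vertex lies in $\WReach_{2r+1}[G,\pi,m_v]$. The disjointness of branch sets then gives injectivity of the map sending each back-neighbor $u$ of $v$ to this reachable vertex, so the back-degree of $v$ in $H$ (in the induced order) is at most $w$. Hence $|E(H)|/|V(H)| \leq w = w(2r+1)$, as required.

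For the harder direction (bounded expansion $\Rightarrow$ bounded $\wcol_r$), I would follow the standard route via the \emph{$r$-admissibility number} $\adm_r(G)$, the minimum over orderings of the maximum, over vertices $v$, of the largest family of internally vertex-disjoint paths of length at most $r$ from $v$ to vertices smaller than $v$. The argument splits into two steps. First, show inductively that $\wcol_r(G)$ is bounded by a function of $\adm_1(G), \ldots, \adm_r(G)$; this uses a greedy order construction, repeatedly selecting a vertex that minimizes the number of disjoint short paths to already-ordered vertices. Second, show that bounded expansion implies bounded $\adm_r(G)$ for each fixed $r$. This is the core combinatorial content and is proven via \emph{(transitive fraternal) augmentations}: iteratively define $G_0 = G$ and $G_{i+1}$ by adding certain edges between vertices at $G_i$-distance $2$, arguing that shallow minors of $G_i$ are shallow minors of $G$ of only slightly larger depth, so bounded expansion of $\Cc$ yields bounded edge density (hence bounded degeneracy, hence bounded orientation in-degree) of $G_i$.

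The main obstacle is the second step of the hard direction: controlling how depth-$r$ minors of the augmented graph $G_i$ pull back to depth-$r'$ minors of the original $G$ with $r'$ still bounded in terms of $r$ and $i$. Each edge of $G_i$ absent from $G$ represents a short path in $G$, and when contracting branch sets in $G_i$ one must re-expand these edges and verify that the resulting branch sets in $G$ have controlled radius and remain disjoint. This bookkeeping is the technical heart of Zhu's proof (also appearing in Nešetřil–Ossona de Mendez's treatment via grads), and the explicit bounds it produces are polynomial in $r$ for classes excluding a topological minor but in general only depend on $r$ and on the expansion function of $\Cc$.
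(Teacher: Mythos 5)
The paper does not actually prove \cref{thm:be-wcol}; it is stated with a citation to Zhu, so there is no in-paper argument to compare against. Judging your proposal on its own terms: the direction from bounded weak coloring numbers to bounded expansion is essentially a correct and complete argument, up to one quantitative slip. If you take $m_v$ to be the $\pi$-minimum of the branch set $\phi(v)$, the path you build from $m_u$ to $m_v$ goes $m_u \to u' \to v' \to m_v$, and the detours inside the branch sets (via their centers) each cost up to $2r$, so the path has length at most $4r+1$, not $2r+1$; you are conflating the centers $c_u,c_v$ with the minima $m_u,m_v$. This only changes the radius you need, so the conclusion $|E(H)|/|V(H)|\leq \wcol_{4r+1}(G)$ for $H\minor_r G$ still gives bounded expansion. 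The injectivity step is fine and worth making explicit: since $m_u<_\pi m_v$ and every vertex of $\phi(v)$ is $\geq_\pi m_v$, the $\pi$-minimum vertex of the path lies in $\phi(u)$, and it is weakly reachable from $m_v$ along the subpath, so distinct back-neighbours of $v$ yield distinct elements of $\WReach_{4r+1}[G,\pi,m_v]$.

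The converse direction, which is the substantial half of the theorem, is in your write-up only a plan. You correctly name the standard route (bound $\wcol_r$ by the $r$-admissibility numbers via a greedy ordering, then bound admissibility using bounded expansion, e.g.\ via transitive fraternal augmentations or by directly extracting a dense shallow minor from a vertex with many internally disjoint short paths into the already-ordered part), and you correctly identify where the work lies, namely in showing that shallow minors of the augmented graphs pull back to shallow minors of $G$ of controlled depth. But neither of the two key lemmas is proved, and you explicitly defer the bookkeeping that you yourself call the technical heart. As it stands this half is an accurate roadmap rather than a proof; to complete it you would need to state and prove the inductive bound on $\wcol_r$ in terms of admissibility and carry out the augmentation (or dense-shallow-minor) argument with explicit control of the depth parameter.
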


\begin{theorem}[\cite{zhu2009colouring,
nevsetvril2011nowhere}]\label{thm:nd-wcol}
 A class $\Cc$ of graphs is nowhere dense if and only
 if for every $r\in \N$ and every real $\epsilon>0$ there exists
 a number $w=w(r,\epsilon)$ such that for every $H\subseteq G\in \Cc$ we have $\wcol_r(H)\leq w\cdot |V(H)|^\epsilon$.  
\end{theorem}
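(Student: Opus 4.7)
I would prove the two directions separately. The direction ($\Leftarrow$) is a direct extraction argument, while ($\Rightarrow$) is the main substance and should proceed by induction on $r$, using the subpolynomial edge-density characterization of nowhere dense classes stated earlier.

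For ($\Leftarrow$), I would argue the contrapositive. Suppose $\Cc$ is not nowhere dense and fix $r$ such that for each $m$ some $G\in\Cc$ admits a depth-$r$ minor model $\phi$ of $K_m$, with branch sets $V_i$ (each of radius at most $r$ around a center $c_i$) and connecting edges $u_{ij}v_{ij}$ with $u_{ij}\in V_i$, $v_{ij}\in V_j$. Build $H\subseteq G$ by keeping, inside each $V_i$, only the BFS paths from $c_i$ to $u_{ij}$ for all $j\neq i$, together with the $\binom{m}{2}$ connecting edges; then $|V(H)|=O(rm^2)$. I claim that $\wcol_{4r+1}(H)\geq m$. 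Indeed, given any order $\pi$, let $\mu_i$ be the $\pi$-minimum of $V(H)\cap V_i$ and relabel so that $\mu_1<_\pi\cdots<_\pi\mu_m$. For each $i<m$, the concatenation of the $H_i$-path from $\mu_i$ via $c_i$ to $u_{im}$ (length $\leq 2r$), the edge $u_{im}v_{im}$, and the $H_m$-path from $v_{im}$ to $\mu_m$ (length $\leq 2r$) is a path of length at most $4r+1$ whose internal vertices lie in $(V_i\setminus\{\mu_i\})\cup V_m$ and are therefore all $>_\pi \mu_i$, by minimality of $\mu_i$ in $V_i$ and of $\mu_m$ in $V_m$ together with $\mu_m>_\pi \mu_i$. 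Hence $\mu_1,\ldots,\mu_m\in\WReach_{4r+1}[H,\pi,\mu_m]$, so $\wcol_{4r+1}(H)\geq m$. For any fixed $\epsilon<1/2$ the assumed bound then gives $m\leq w(4r+1,\epsilon)\cdot(O(rm^2))^\epsilon=O(m^{2\epsilon})$, a contradiction for $m$ large.

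For ($\Rightarrow$), I would induct on $r$. The base case $r=1$ reduces to the degeneracy bound: $\wcol_1(H)=\col(H)\leq 2\cdot|E(H')|/|V(H')|+1$ applied to the densest $H'\subseteq H$, which is $O(|V(H)|^\epsilon)$ by the density theorem for nowhere dense classes. For the inductive step I would follow Zhu's strategy (originally in the bounded-expansion setting and later extended to nowhere denseness): construct the order greedily by repeatedly peeling off a vertex whose weak $r$-reach set relative to the already-fixed suffix is small. Existence of such a vertex is forced by the subpolynomial edge-density bound applied to a depth-$O(r)$ minor obtained by contracting weakly-reaching sub-paths that witness many simultaneously-reachable vertices. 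The main obstacle I expect is keeping the subpolynomial factor under control through this iteration: a direct induction from $r-1$ to $r$ does not cleanly go through, and one instead has to detour via an intermediate parameter (such as the $r$-admissibility or strong $r$-reachability number) and compose bounds for $\lceil r/2\rceil$ into bounds for $r$ by splitting each weakly-reaching path at its midpoint, paying a multiplicative rather than additive cost in each step of the induction.
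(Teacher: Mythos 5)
The paper itself offers no proof of this theorem---it is quoted from the literature---so your proposal can only be judged on its own terms.

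Your ($\Leftarrow$) direction is complete and correct. Extracting the $O(rm^2)$-vertex skeleton $H$ directly from a depth-$r$ minor model of $K_m$ is the right move, since the theorem quantifies over arbitrary subgraphs $H\subseteq G\in\Cc$ and the class is not assumed subgraph-closed (so you correctly avoid \cref{cor:sd}). The verification that the branch-set minima $\mu_1<_\pi\cdots<_\pi\mu_m$ all lie in $\WReach_{4r+1}[H,\pi,\mu_m]$ is sound: the walk $\mu_i\to c_i\to u_{im}\to v_{im}\to c_m\to\mu_m$ has length at most $4r+1$, every vertex on it other than $\mu_i$ is $>_\pi\mu_i$ by the two minimality choices, and $m\le w\cdot(O(rm^2))^\epsilon$ is absurd for $\epsilon<1/2$ and large $m$.

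The ($\Rightarrow$) direction, however, is a roadmap rather than a proof, and what you defer is the entire mathematical content of the theorem. Two concrete gaps. First, the greedy step as you state it---``peel off a vertex whose weak $r$-reach set relative to the already-fixed suffix is small''---is not well defined: whether $u\in\WReach_r[G,\pi,v]$ depends on the positions of vertices placed \emph{after} $v$ (the internal vertices of the witnessing path need only exceed $u$, and may exceed $v$), so fixing a suffix does not determine the weak reach sets of the vertices in it. This is precisely why the literature routes through a forward-reachability parameter such as the $r$-admissibility $\adm_r$, whose greedy construction is well defined. Second, the two lemmas that make that route work---$\adm_r(G)\le\mathrm{poly}(r,\nabla_r(G))$ via contracting reaching paths into a bounded-depth minor, and $\wcol_r(G)\le\adm_r(G)\cdot(\adm_r(G)-1)^{r-1}$ (or an analogue)---are named in spirit but not proved, and the exponent bookkeeping (choosing the density exponent as $\epsilon/O(r)$ so that the $r$-fold multiplicative loss stays below $|V(H)|^{\epsilon}$, hereditarily in $H$) is exactly the ``obstacle'' you flag without resolving. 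You have correctly identified where the difficulty lies and which tools overcome it, but as written the forward direction cites the destination rather than reaching it.
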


To get used to the weak coloring numbers let us make the
connection with the splitter game. 

\begin{theorem}[\cite{KreutzerPRS16}]\label{thm:splitterwcol} Let $G$ be a graph, let $r\in\N$ and let
  $\ell=\wcol_{2r}(G)$. Then splitter wins the $\ell$-round radius-$r$ splitter
  game on $G$.
\end{theorem}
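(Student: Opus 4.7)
The plan is to fix a linear order $\pi$ of $V(G)$ realizing $\wcol_{2r}(G)=\ell$ and to equip splitter with the following simple rule: in round $i+1$, once connector produces a radius-$r$ subgraph $C_{i+1}$ of $G_i$ centred at some vertex $v_{i+1}\in V(C_{i+1})$, splitter removes the $\pi$-minimum vertex $w_{i+1}$ of $V(C_{i+1})$.

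The heart of the proof is the invariant that, whenever splitter has not yet won after round $i+1$, the vertices $w_1,\dots,w_{i+1}$ are pairwise distinct and all lie in $\WReach_{2r}[G,\pi,v_{i+1}]$. Distinctness comes essentially for free from the rules of the game, since $w_j$ is deleted from $V(G_j)$ while all later moves stay within $V(G_j)$. For the weak-reachability clause I would exploit the nested containment $V(C_{i+1})\subseteq V(G_i)\subseteq V(C_i)\subseteq\cdots\subseteq V(C_j)$, which in particular places $v_{i+1}$ inside $V(C_j)$. Since $C_j$ has radius at most $r$ around $v_j$, concatenating the two radius-$r$ paths from $w_j$ and from $v_{i+1}$ to $v_j$ yields a walk in $C_j$ of length at most $2r$, which shortens to a path $P_j$ from $w_j$ to $v_{i+1}$ whose internal vertices lie in $V(C_j)\setminus\{w_j\}$; by splitter's rule at round $j$, every such vertex is strictly $\pi$-above $w_j$, and together with $w_j\leq_\pi v_{i+1}$ (again by minimality in $V(C_j)$) this witnesses $w_j\in\WReach_{2r}[G,\pi,v_{i+1}]$.

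Once the invariant is established, the theorem follows immediately: if connector manages to make a legal move $C_{i+1}$ at all, then $i+1\leq|\WReach_{2r}[G,\pi,v_{i+1}]|\leq\ell$. Hence at round $\ell+1$ connector has no move left, which means $G_\ell=\emptyset$ and splitter has won within $\ell$ rounds.

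I expect the main obstacle to be bridging the local nature of the game, in which splitter is confined to picking a vertex of the small subgraph $C_{i+1}$, with the global accounting provided by $\wcol_{2r}(G)$, which concerns weak reachability in all of $G$. The key realization is that the chain $V(C_1)\supseteq V(C_2)\supseteq\cdots$ forces each past pick $w_j$ to be weakly $2r$-reachable from every future centre via a path staying inside the corresponding $C_j$; once this nesting is noted the reachability witnesses essentially hand themselves over, and the remainder of the proof is a bookkeeping exercise.
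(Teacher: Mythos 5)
Your proposal is correct and follows essentially the same route as the paper: splitter deletes the $\pi$-minimum vertex of connector's chosen subgraph, and each deleted vertex is weakly $2r$-reachable (via a path through the centre, staying inside the nested subgraphs) from every vertex surviving to later rounds, so the game lasts at most $\ell=\wcol_{2r}(G)$ rounds. The only cosmetic difference is that you accumulate the distinct witnesses inside $\WReach_{2r}[G,\pi,v_{i+1}]$ for the current centre, whereas the paper records weak reachability of $w_{i+1}$ from every vertex of $C_{i+1}$; this is the same bookkeeping, and your explicit use of the containment chain $V(C_{i+1})\subseteq V(C_i)\subseteq\cdots$ actually spells out a step the paper leaves implicit.
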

\begin{proof} Let $\pi$ be a linear order with 
$\WReach_{2r}[G,\pi,v]\leq \ell$ for all $v\in V(G)$. Suppose in
  round $i+1\leq \ell$, connector chooses a subgraph $C_{i+1}$ of $G_i$ of radius at most $r$. Let
  $w_{i+1}$ (splitter's choice) be the minimum vertex of 
  $C_{i+1}$
  with respect to $\pi$. Then for each $u\in V(C_{i+1})$ there is a path
  between $u$ and~$w_{i+1}$ of length at most $2r$ that uses only vertices of
  $C_{i+1}$. As~$w_i$ is minimum in $C_{i+1}$, $w_{i+1}$
  is weakly $2r$-reachable from each $u\in V(C_{i+1})$. Now let
  $G_{i+1}:=C_{i+1}-\{w_{i+1}\}]$. As~$w_{i+1}$ is not
  part of $G_{i+1}$, in the next round splitter will choose another vertex which
  is weakly $2r$-reachable from every vertex of the remaining
  graph. As $\WReach_{2r}[G,\pi,v]\leq \ell$ for all $v\in V(G)$, the game must stop after at
  most $\ell$ rounds.
\end{proof}

This gives for example a cubic number of rounds for splitter
to win on planar graphs~\cite{van2017generalised}. 
Not surprisingly, the weak coloring numbers can also be 
used to give much improved bounds for uniform quasi-wideness
on bounded expansion classes. 

\begin{theorem}[\cite{NadaraPRRS18}]\label{thm:uqw-tgv}
Let $G$ be a graph, $A\subseteq V(G)$, $r,m\in \N$ and 
assume $\wcol_r(G)=c$. If
$|A| \geq 4\cdot (2cm)^c$,  
then there exist sets $S \subseteq V(G)$
and $B \subseteq A \setminus S$ such that $|S| \leq c$, $|B| \geq m$, and~$B$ 
is $r$-independent in $G-S$.
\end{theorem}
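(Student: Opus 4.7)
The plan is to fix a linear order $\pi$ on $V(G)$ witnessing $\wcol_r(G) = c$, so $|\WReach_r[G,\pi,v]| \leq c$ for every $v \in V(G)$, and then combine a pigeonhole pruning (to build $S$) with a $\pi$-greedy sweep (to build $B$). The guiding locality principle, in the spirit of \cref{lem:wcol-sep}, is that whenever $u,v$ are joined by a path of length at most $r$ in $G - S$, the $\pi$-minimum vertex on that path lies in $(\WReach_r[G,\pi,u] \cap \WReach_r[G,\pi,v]) \setminus S$. Thus, sweeping sufficiently ``popular'' common weakly reachable vertices into $S$ destroys short paths in $G - S$.

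For the pruning step, I initialize $A_0 := A$, $S_0 := \emptyset$ and iterate as follows: at step $i+1$, if some $s \in V(G) \setminus S_i$ lies in $\WReach_r[G,\pi,v] \setminus \{v\}$ for at least $|A_i|/(2cm)$ vertices $v \in A_i$, then append $s$ to $S_i$ and restrict $A_{i+1}$ to those $v$; otherwise halt. A straightforward induction shows $S_i \subseteq \WReach_r[G,\pi,v] \setminus \{v\}$ for every $v \in A_i$, so $|S_i| \leq c-1$ at all times and the loop halts after at most $c-1$ steps. Each iteration shrinks $|A_i|$ by a factor of at most $2cm$, so the assumed bound $|A| \geq 4(2cm)^c$ yields $|A_i| \geq |A|/(2cm)^{c-1} \geq 8cm \geq 2m$ at the halting moment.

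For the extraction step, I process the halting set $A_i$ in $\pi$-order: at each round, add the currently smallest unmarked vertex $v$ to $B$ and mark every $u \in A_i$ with $\dist_{G-S_i}(v,u) \leq r$. For any such marked $u$, fix a realizing path in $G - S_i$ and let $w$ be its $\pi$-minimum; then $w \notin S_i$, $w \in \WReach_r[G,\pi,u] \cap \WReach_r[G,\pi,v]$, and $w \leq_\pi v$ because $v$ is the smallest unmarked vertex. Case split: if $w = v$, then $v \in \WReach_r[G,\pi,u] \setminus \{u\}$, and the halting condition applied to $s = v$ bounds such $u$ by $|A_i|/(2cm)$; if $w <_\pi v$, then $w$ ranges over $\WReach_r[G,\pi,v] \setminus (\{v\} \cup S_i)$, which contributes at most $c-1$ choices, and for each such $w$ the halting condition again bounds the number of witnesses $u$ by $|A_i|/(2cm)$. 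Summing yields at most $c \cdot |A_i|/(2cm) = |A_i|/(2m)$ marks per pick, so $m$ picks mark at most $|A_i|/2$ vertices and, since $|A_i| \geq 2m$, the greedy sweep produces the desired $|B| = m$.

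The output is $S := S_i$ (of size at most $c$) and $B$, contained in $A_i \subseteq A \setminus S$ because every $v \in A_i$ satisfies $S_i \subseteq \WReach_r[G,\pi,v] \setminus \{v\}$ and hence $v \notin S_i$; $r$-independence of $B$ in $G - S$ follows immediately from the marking rule, for any pair $v <_\pi v'$ in $B$ with $\dist_{G-S}(v,v') \leq r$ would have caused $v'$ to be marked at the time of picking $v$. The delicate part is calibrating the threshold $1/(2cm)$: it must be loose enough that the pruning does not collapse $|A_i|$ below $\Theta(m)$, yet tight enough that the greedy case analysis (one ``self'' witness $w = v$ plus at most $c - 1$ ``ancestor'' witnesses from $\WReach_r[G,\pi,v] \setminus \{v\}$) sums to exactly $|A_i|/(2m)$, which is what makes the bound $|A| \geq 4(2cm)^c$ sufficient.
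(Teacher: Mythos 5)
The paper states this theorem only by reference to \cite{NadaraPRRS18} and gives no proof of its own, so there is no internal argument to compare against. Your strategy is the standard one for this result: prune $A$ by repeatedly extracting a vertex weakly $r$-reachable from a large fraction of the surviving set, then greedily sweep the pruned set in $\pi$-order, using the fact (in the spirit of \cref{lem:wcol-sep}) that any path of length at most $r$ in $G-S$ between two vertices passes through a common weakly $r$-reachable vertex outside $S$. The pruning analysis ($S_i\subseteq \WReach_r[G,\pi,v]\setminus\{v\}$ for all $v\in A_i$, hence $|S_i|\leq c-1$ and at most $c-1$ iterations, leaving $|A_i|\geq 8cm$) and the final verification of $r$-independence are correct.

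There is, however, a counting slip in the extraction step. Your two cases cover only those marked $u$ for which the $\pi$-minimum $w$ of the realizing path satisfies $w\in\WReach_r[G,\pi,u]\setminus\{u\}$. This misses (i) $u=v$ itself, which is marked since $\dist_{G-S_i}(v,v)=0$, and (ii) the sub-case $w=u$, i.e.\ $u$ is the minimum of its own path --- which genuinely occurs, e.g.\ when $u<_\pi v$ and $uv\in E(G-S_i)$. Such a $u$ lies in $\WReach_r[G,\pi,v]\setminus\{v\}$, but it is \emph{not} counted by the halting condition applied to $s=w=u$, because that condition counts vertices $u'$ with $w\in\WReach_r[G,\pi,u']\setminus\{u'\}$ and excludes $u'=w$. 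The correct per-pick bound is therefore $|A_i|/(2m)+c$ rather than $|A_i|/(2m)$: one extra mark for $u=v$ and at most $c-1$ extra marks for case (ii), since those $u$ all lie in $\WReach_r[G,\pi,v]\setminus\{v\}$. This does not sink the proof: with $|A_i|\geq 8cm$ one has $c\leq |A_i|/(8m)$, so $m$ picks mark at most $5|A_i|/8<|A_i|$ vertices and the sweep still yields $|B|\geq m$; but the arithmetic as you wrote it is not exhaustive and should be corrected. A small additional nit: $w\leq_\pi v$ holds simply because $w$ is the minimum of a path containing $v$, not ``because $v$ is the smallest unmarked vertex.''
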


We now come to the proof of \cref{thm:nd-nc}, which 
follows from \cref{thm:nd-wcol}
and the following lemma. 

\begin{lemma}[\cite{grohe2017deciding}]\label{thm:covers}
  Let~$G$ be a graph such that~$\mathrm{wcol}_{2r}(G)\leq s$ and
  let~$\pi$ be an order witnessing this. For $v\in V(G)$, let $m(v)$ be
  the minimum of $N_r(v)$ with respect to $\pi$. For each 
  $v\in V(G)$ let 
  \[X_{2r}[G,\pi,v] := \{w\in V(G) : v\in \WReach_{2r}[G,\pi,w]\}.\]
  Then~$\XXX\coloneqq \{X_{2r}[G,\pi,m(v)] : v\in V(G)\}$ 
  is an~$r$-neighborhood cover of~$G$ with radius at
  most~$2r$ and maximum degree at most~$s$.
\end{lemma}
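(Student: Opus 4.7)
The plan is to verify the three properties of $\XXX$ separately: that every cluster is connected with radius at most $2r$, that $\XXX$ covers every $r$-neighborhood, and that every vertex lies in at most $s$ clusters.

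For the first property, I would observe that for any root $u$ and any $w \in X_{2r}[G,\pi,u]$ there is by definition a path $u = x_0, x_1, \ldots, x_k = w$ of length $k \leq 2r$ whose internal vertices all satisfy $x_i >_\pi u$. Every prefix $u, x_1, \ldots, x_i$ of this path again witnesses that $u \in \WReach_{2r}[G,\pi,x_i]$, so each internal vertex $x_i$ also belongs to the cluster. Hence the induced subgraph on $X_{2r}[G,\pi,u]$ is connected and has radius at most $2r$ around $u$.

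For the covering property, fix $v \in V(G)$, set $u = m(v)$, and pick any $w \in N_r(v)$. Concatenating a shortest $u$--$v$ path of length at most $r$ with a shortest $v$--$w$ path of length at most $r$ gives a walk from $u$ to $w$ of length at most $2r$; every vertex appearing on this walk lies within distance $r$ of $v$, i.e.\ belongs to $N_r(v)$. Extracting a path from the walk preserves the endpoints and only retains internal vertices from this set; since $u$ is $\pi$-minimum in $N_r(v)$, each internal vertex is strictly $>_\pi u$. Thus $u \in \WReach_{2r}[G,\pi,w]$, giving $w \in X_{2r}[G,\pi,u]$ and so $N_r(v)$ is contained in a single cluster. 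The only subtle point here is making sure the extracted path still avoids $u$ as an internal vertex, but this is automatic because $u$ is an endpoint of the walk.

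Finally, for the degree bound, note that a vertex $w$ lies in the cluster $X_{2r}[G,\pi,m(v)]$ exactly when $m(v) \in \WReach_{2r}[G,\pi,w]$. Since $\XXX$ is a set of clusters (indexed by their roots $m(v)$), the number of clusters containing $w$ is at most the number of distinct roots with this property, which is at most $|\WReach_{2r}[G,\pi,w]| \leq \wcol_{2r}(G) \leq s$. I expect the main obstacle to be the second step, as it is the only point where the specific choice of the root $m(v)$ as the $\pi$-minimum of $N_r(v)$, rather than of $v$ alone, is crucially used; everything else follows directly from unpacking the definitions of $X_{2r}$ and $\WReach_{2r}$.
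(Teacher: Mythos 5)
Your proposal is correct and follows essentially the same route as the paper's proof: the $\pi$-minimum of $N_r(v)$ is weakly $2r$-reachable from every vertex of $N_r(v)$ via the concatenated path, and the degree of $w$ equals $|\WReach_{2r}[G,\pi,w]|\leq s$. Your prefix-of-the-witnessing-path argument for connectivity of each cluster is a small point the paper glosses over, but it is a refinement of the same argument, not a different approach.
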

\begin{proof}
  Clearly the radius of each cluster is at most~$2r$, because if~$v$
  is weakly~$2r$-reachable from~$w$, then~$w\in
  N_{2r}(v)$. Furthermore, for $v\in V(G)$ we have $N_r(v)\subseteq X_{2r}[G,\pi,m(v)]$. 
  To see this, let~$m(v)$ be the minimum
  of~$N_r(v)$ with respect to~$\pi$. Then~$m(v)$ is weakly~$2r$-reachable
  from every~$w\in N_r(v)\setminus\{m(v)\}$ as there is a path from~$w$
  to~$m(v)$ which uses only vertices of~$N_r(v)$ and has length at
  most~$2r$ and~$m(v)$ is the minimum element
  of~$N_r(v)$. Thus~$N_r(v)\subseteq X_{2r}[G,\pi,m(v)]$. Finally observe
  that for every~$v\in V(G)$,
 \begin{align*}
   d^\mathcal{X}(v) &= |\{u\in V(G) : v\in X_{2r}[G,\pi,u]\}|\\ &=
   |\{u\in V(G) : u\in \WReach_{2r}[G,\pi, v]\}| = |\WReach_{2r}[G,\pi,v]|\leq s.
 \end{align*}
\end{proof}

Observe that the above defined neighborhood cover $\Xx$ of an 
$n$-vertex graph may have $n$ elements, as there
may be one cluster for every vertex. Hence, when branching
over the elements of the cover we may have a branching 
degree of $n$. However, the degree of the
cover allows to bound the sum of all graphs in the recursion 
tree by $\Oof(n^{1+\epsilon})$ for nowhere dense classes. 
A different view on covers that leads to a smaller 
branching degree can be obtained as follows (we would branch 
over the $N$ subgraphs instead of over the $n$ clusters). 

 \begin{theorem}[\cite{patrice-sebi}]
 		Let $G$ be a graph and let $r\in\mathbb N$. Then there exist
	$N\leq {\mathrm wcol}_{4r+1}(G)$ induced subgraphs $H_1,\dots,H_N$ of $G$ such that
	\begin{enumerate}
		\item\label{it:r} for every $v\in V(G)$ there is some $1\leq i\leq N$ with $N_r(v)\subseteq H_i$;
		\item\label{it:rr} every connected component of the $H_i$'s has radius at most $2r$.
	\end{enumerate}
 \end{theorem}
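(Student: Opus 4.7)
The plan is to build the subgraphs $H_i$ by grouping the clusters produced in \cref{thm:covers} via a greedy coloring derived from $\wcol_{4r+1}$. The key idea is that if we arrange the coloring so that same-color clusters are pairwise vertex-disjoint \emph{and} have no edges of $G$ between them, then the induced subgraph on their union naturally splits into connected components, each of which is a single cluster of radius at most $2r$.

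\textbf{Setup.} Let $s \coloneqq \wcol_{4r+1}(G)$ and fix an order $\pi$ of $V(G)$ witnessing this. I process the vertices in order $\pi$ and assign each $v$ a color $c(v) \in [s]$ that differs from the colors of all $u \in \WReach_{4r+1}[G,\pi,v]\setminus\{v\}$; this greedy choice always succeeds because $|\WReach_{4r+1}[G,\pi,v]| \leq s$. For each color $i$ I set
\[
V_i \coloneqq \bigcup_{u:\, c(u)=i} X_{2r}[G,\pi,u], \qquad H_i \coloneqq G[V_i].
\]
Condition (1) is then immediate from \cref{thm:covers}: for every $v$, the $\pi$-minimum $m(v)$ of $N_r(v)$ satisfies $N_r(v)\subseteq X_{2r}[G,\pi,m(v)] \subseteq V_{c(m(v))}$.

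For condition (2) I would argue by contradiction that no connected component of $H_i$ meets two distinct clusters. Suppose $X_{2r}[G,\pi,u_1]$ and $X_{2r}[G,\pi,u_2]$ with $c(u_1)=c(u_2)=i$ lie in the same component of $H_i$, and without loss of generality $u_1 <_\pi u_2$. Then either the clusters share a vertex $w$ (so $u_1,u_2 \in \WReach_{2r}[G,\pi,w]$), or some edge $xy \in E(G)$ joins $x \in X_{2r}[G,\pi,u_1]$ with $y \in X_{2r}[G,\pi,u_2]$. In either case, concatenating the two weak $2r$-reach paths (with the edge $xy$ inserted in the second case) yields a walk from $u_1$ to $u_2$ of length at most $4r+1$. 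Every internal vertex of this walk lies $>_\pi u_1$: on the $u_1$-side this is guaranteed by the weak reachability condition, and the remaining vertices (including $w$ or $y$) are $\geq_\pi u_2 >_\pi u_1$. Shortcutting the walk to a path gives $u_1 \in \WReach_{4r+1}[G,\pi,u_2]\setminus\{u_2\}$, which by the coloring rule contradicts $c(u_1)=c(u_2)$. Hence each component of $H_i$ lies inside a single cluster $X_{2r}[G,\pi,u^*]$, which is connected of radius at most $2r$ by \cref{thm:covers}.

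The main obstacle is the edge case, which is where the ``$+1$'' in $\wcol_{4r+1}$ is born: one must check carefully that even when a cluster-boundary vertex $x$ happens to coincide with $u_1$ (or $y$ with $u_2$), the concatenation still has all its internal vertices strictly above $u_1$ in $\pi$, so that the walk witnesses weak $(4r+1)$-reachability rather than falling just short. Once this verification is in place, the bound $N \leq \wcol_{4r+1}(G)$ drops out of the coloring, and the two conditions follow from the construction.
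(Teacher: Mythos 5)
Your proposal is correct and follows essentially the same route as the paper: a greedy coloring that separates vertices weakly $(4r{+}1)$-reachable from one another, defining each $H_i$ as the induced subgraph on the union of same-colored clusters $X_{2r}[G,\pi,u]$, and the same concatenation-of-paths contradiction to show that same-colored clusters are neither intersecting nor adjacent, so each component of $H_i$ is a single cluster of radius at most $2r$. Your explicit attention to the internal-vertex check in the concatenated walk is a careful elaboration of a step the paper treats more briskly, but it is not a different argument.
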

\begin{proof}
  Let $\pi$ be a linear order of $V(G)$ witnessing 
  that $\wcol_{4r+1}(G)\leq N$ and let $c\colon V(G)\rightarrow \{1,\ldots,N\}$ be a coloring so that $c(u)\neq c(v)$ if 
  $u\in \WReach_{4r+1}[G,\pi,v]$. Such a coloring can be 
  computed by a simple greedy procedure. 
Let $H_i$ be the subgraph of $G$ induced by the sets
$X_{2r}[G,\pi,u]$ for all vertices $u$ with $c(u)=i$, 
where $X_{2r}[G,\pi,u]$ is defined as in \cref{thm:covers}. 
Let us show that the $H_i$ have the desired properties. 
	
	As in the proof of \cref{thm:covers} consider $v\in V(G)$ and let $m(v)$ be the minimum vertex of~$N_r(v)$ with
        respect to $\pi$. Then~$m(v)$ is weakly $2r$-reachable from every vertex in $N_r(v)$ thus \mbox{$N_r(v)\subseteq H_{c(m(v))}$} and \eqref{it:r}
        holds.
	
	As observed before, for every $v\in V(G)$ we have $X_{2r}[G,\pi,v]\subseteq
        N_{2r}(v)$. Now assume towards a contradiction that there exist $u_1<_\pi u_2$,
        $z_1\in X_{2r}[G,\pi,u_1]$, and $z_2\in X_{2r}[G,\pi,u_2]$ such that $c(u_1)=c(u_2)$ and
        $z_1$ and $z_2$ are either equal or adjacent.  Then, considering a path
        of length at most $2r$ linking $u_1$ and $z_1$ with minimum $u_1$, the
        edge $\{z_1,z_2\}$ if $z_1\neq z_2$ and a path of length at most $2r$
        linking $z_2$ and $z_2$ with minimum~$u_2$, we obtain a path of length
        at most $4r+1$ linking~$u_1$ and $u_2$ with minimum $u_1$. Hence $u_1$
        is weakly $(4r+1)$-reachable from $u_2$, contradicting the hypothesis
        $c(u_1)=c(u_2)$. It follows that all connected components of $H_i$ are
        of the form $X_{2r}[G,\pi,v]$ for some $v\in V(G)$ 
        hence have radius at
        most~$2r$. Thus \eqref{it:rr} holds.
\end{proof}

Without going into more details: neighborhood covers can 
now be used to group vertices appropriately and to efficiently
solve the model-checking problem. Further applications of the
weak coloring numbers are in the efficient approximation of the 
distance-$r$ dominating set problem~\cite{AmiriMRS18,Dvorak13,dvovrak2019distance}, as well 
as in the kernelization of distance-$r$ dominating set and 
distance-$r$ independent set~\cite{DrangeDFKLPPRVS16,EickmeyerGKKPRS17,MS19}.

\section{Conclusion and outlook}

Nowhere dense graph classes have a rich algorithmic theory
and in particular, under the assumption of subgraph closure, 
these classes constitute the border of tractability for 
first-order model-checking. Current research follows two 
lines to extend this border of tractability beyond subgraph
closed graph classes. The first line aims to study classes 
that are obtained as first-order interpretations or 
transductions of bounded expansion or nowhere dense 
classes. For example one obtains the class of map graphs
as a first-order transduction from the class of planar graphs. 
Classes that are obtained as first-order transductions of 
sparse graph classes are called \emph{structurally sparse}
in \cite{GajarskyKNMPST18}. It is a natural conjecture 
that good algorithmic 
properties of structurally sparse classes are inherited from 
the sparse base classes. I refer to \cite{KwonPS17,GajarskyK18,GajarskyHOLR16,
GajarskyKNMPST18,nesetril2019classes} for progress 
in this direction. 

The second line of research is motivated by the observation
that nowhere dense graph classes are \emph{monadically 
stable}~\cite{AdlerA14}, a property that is studied in model theory, see e.g.~\cite{baldwin1985second}. Model theory
offers a wealth of tools that could be exploited in an 
algorithmic context. For example, we proved in~\cite{FabianskiPST19} that
the distance-$r$ dominating set problem is fixed-parameter
tractable on every class of graphs where the distance-$r$
formula is both stable and equational.

\bibliographystyle{abbrv}
\bibliography{ref} 

\end{document}